\newtheorem{fact}{Fact}
\newtheorem{theorem}{Theorem}
\newtheorem{corollary}{Corollary}
\newtheorem{lemma}{Lemma}
\theoremstyle{definition}
\title{Checking generalized debates with small space and randomness \\}
\author{\\ H. G{\"{o}}kalp Demirci and A. C. Cem Say \\
\\
\small Bo\u{g}azi\c{c}i University, Department of Computer Engineering, Bebek 34342 \.{I}stanbul, Turkey
\\
\small \texttt{\{gokalp.demirci,say\}@boun.edu.tr}
\\ \\
}
\date{}
\begin{document}

\newgeometry{margin=1.2in}

\maketitle

\begin{abstract}
 
We introduce a model of probabilistic debate checking, where a silent resource-bounded verifier reads a dialogue about the membership of the string in the language under consideration between a prover and a refuter. Our model combines and generalizes the concepts of one-way interactive proof systems, games of incomplete information, and probabilistically checkable complete-information debate systems. We consider debates of partial and zero information, where the prover is prevented from seeing some or all of the messages of the refuter, as well as those of complete information. The classes of languages with debates checkable by verifiers operating under severe bounds on the memory and randomness are studied.

We give full characterizations of versions of these classes corresponding to simultaneous bounds of $O(1)$ space and $O(1)$ random bits, and of logarithmic space and polynomial time. It turns out that constant-space verifiers, which can only check complete-information debates for regular languages deterministically, can check for membership in any language in $\mathsf{P}$ when allowed to use a constant number of random bits. Similar increases also occur for zero- and partial- information debates, from $\mathsf{NSPACE}(n)$ to $\mathsf{PSPACE}$, and from $\mathsf{E}$ to $\mathsf{EXPTIME}$, respectively. Adding logarithmic space to these constant-randomness verifiers does not change their power. When logspace debate checkers are restricted to run in polynomial time without a bound on the number of random bits, the class of debatable languages equals $\mathsf{PSPACE}$ for all debate types. 
We also present a result on the hardness of approximating the quantified
max word problem for matrices that is a corollary of this
characterization.
\end{abstract}

\newpage

\restoregeometry

\section{Introduction}
\label{sec:intro}

An alternating Turing machine for a language $L$ can be viewed \cite{Go08} as a system where a deterministic Turing machine (the ``verifier") makes a decision about whether the input string is a member of $L$ upon reading a ``complete-information" debate on this issue between a ``prover" and a ``refuter." Reif \cite{Re79} has shown that the class of languages with debates checkable by space-bounded verifiers is enlarged significantly when the debate format is generalized so that the refuter is allowed to hide some of its messages from the prover: For constant-space verifiers, the class of languages with such partial-information debates is $\mathsf{E}$, whereas the corresponding class for complete-information debates is just the regular languages. Zero-information debates, where the refuter is forced to hide all its messages from the prover, correspond to the class $\mathsf{NSPACE}(n)$ under this resource bound \cite{PR79}.

The case where the verifier is upgraded to a probabilistic Turing machine has first been studied by Condon et al. \cite{CFLS95}, who showed that polynomial-time verifiers that use logarithmically many random bits, and read only a constant number of bits of a complete-information debate can handle every language in $\mathsf{PSPACE}$. Demirci et al. \cite{DSY12} initiated the study of probabilistic verifiers for partial-information debates, and proved some lower-bound results under certain restrictions on the behaviors of the prover and the refuter.

Analyzing the power of computational models under severe resource bounds often gives insight about their nature. In this paper, we characterize the three classes of languages that have complete, zero, and partial-information debates with constant-space verifiers which use only a constant number of random bits, independent of the length of the input, as $\mathsf{P}$, $\mathsf{PSPACE}$, and $\mathsf{EXPTIME}$, respectively.  We also consider verifiers for all these three types of debates with simultaneous logarithmic-space and polynomial-time bounds, without constraining the number of coin tosses, and show that the corresponding classes coincide with $\mathsf{PSPACE}$. Our proofs do not depend on the restrictions of  \cite{DSY12} on the prover and the refuter.

The rest of the paper is structured as follows: Section \ref{sec:preliminaries} describes our debate checking model and relevant previous work. We examine the power of constant-randomness checkers with simultaneous small space bounds for the three debate types mentioned above in Section \ref{sec:consrand}. Section \ref{sec:polytime} contains a characterization of languages with debates checkable by logspace polynomial-time verifiers. A nonapproximability result that follows from our work is presented in Section \ref{sec:nonapp}. Section \ref{sec:conc} is a conclusion.

\section{Preliminaries} 
\label{sec:preliminaries}

\subsection{The model}
\label{subsec:model}

Our formal model of debate systems combines and generalizes the concepts of one-way interactive proof systems \cite{CL89,Co93B}, games of incomplete information \cite{Re79,PR79}, probabilistically checkable complete-information debate systems \cite{CFLS95,CFLS97}, and our previous work on asymmetric debates \cite{DSY12}. The verifier is a probabilistic Turing machine (PTM) with a read-only input tape and a single read/write work tape. The input tape holds the input string between two occurrences of an end-marker symbol, and we assume that the machine's program never attempts to move the input head beyond the end-markers. The input tape head is on the left end-marker at the start of the process. The verifier reads information written by the prover and the refuter, (conventionally named Player 1 and Player 0, and denoted P1 and P0, respectively,) by alternately consulting two \textit{reading cells}, C1 and C0. When it is consulted for the $i$th time, cell C1 returns the $i$th symbol in P1's argument for the membership of the input string in the language under consideration. C0 similarly yields the next symbol in P0's counterargument. The symbols used in this communication are chosen from three   alphabets named $\Gamma_1$, $\Gamma_0$, and $\Delta$, such that $\Gamma_0 \cap \Delta = \emptyset$. For $i \in \{0,1\}$, $\Gamma_i$ is the set of symbols emittable by P$i$ that can be seen by the opposing player. $\Delta$ is the set of private symbols that P0 may choose to write to the verifier without showing to P1. P0 can use any member of $\Gamma_0 \cup \Delta $ in its messages. P1 (resp., P0) is assumed to have seen the subsequence of all the public symbols (i.e. those in $\Gamma_0$ and $\Gamma_1$) among the first $i-1$ (resp., $i$) symbols emitted up to that time before preparing its $i$th symbol. 
Using this communication infrastructure, P1 attempts to convince the verifier that it is right, and P0 tries to prove P1 wrong. We allow the possibility that the cheating player sends an infinite sequence of symbols, in order to try to make the verifier run forever, rather than to arrive at the correct decision. The verifier also has access to a source of private random bits. The state set of the verifier TM is $Q$, containing, among others, two special halting states, $q_{a}$ (accept) and $q_{r}$ (reject). One of the non-halting states is designated as the start state. 

Two (possibly different) subsets $R$ and $T$ of $Q$ are designated as the sets of \textit{reading} and \textit{coin-tossing states}, respectively. $R$ is partitioned to two subsets, $R_1$ and $R_0$, where the program of the machine (to be described shortly) ensures that the first reading state to be entered is in $R_1$, and the sequence of reading states entered during computation alternates between members of $R_1$ and $R_0$. Whenever a state $q \in R_i$ is entered, the next symbol of P$i$'s message is written in reading cell C$i$. Let $\lozenge=\{-1,0,+1\}$ denote the set of possible head movement directions. The program of the verifier is formalized by the  transition function $\delta$ as follows: For $i \in \{0,1\}$ and $q \in R_i\cap T -\{q_{a},q_{r}\}$, $\delta(q,\zeta,\theta,\sigma,b)=(q',\theta',d_{ih},d_{wh})$ means that the machine will switch to state $q'$, write $\theta'$ on the work tape, move the input head in direction $d_{ih} \in \lozenge$, and the work tape head in direction $d_{wh} \in \lozenge$, if it is originally in state $q$, scanning the symbols   $\zeta$, $\theta$, and $\sigma$ in the input and work tapes, and the reading cell C$i$, respectively, and seeing the random bit $b$ as a result of the coin toss. If $q \in T - (R \cup \{q_{a},q_{r}\})$, a restricted version of the verifier transition function described above that does not involve the symbol from the reading cell is applied. For $q \in Q-T$, the program's access to the reading cells is again determined by whether $q$ is in $R_0$, $R_1$, or $Q-(T\cup R)$ as described above, but with restricted transitions that do not use a random bit.

A \textit{configuration} of the verifier is a 4-tuple containing its state, head positions, and work tape contents, as usual. Configurations whose state components are in the set $R$ will be called \textit{reading configurations}.

We will now describe an infinite tree, called the \textit{debate tree}.  The even-numbered levels (including level 0, containing the root node) of the debate tree will correspond to points in the debate where it is P1's turn to speak, so nodes at those levels are called P1 nodes. The remaining levels correspond to P0. Each P1 node has $|\Gamma_1|$ children, with each  one of the edges connecting it to its children corresponding to a different symbol that P1 can emit at that point in the debate. Each P0 node has $|\Gamma_0|+|\Delta|$ children, with a similar meaning. The edges connecting a P0 node to its children are called  \textit{P0 edges}.

At any point in the debate, P1 can base its decision on what to say next on the sequence of P0 symbols that it has ``seen" up to that moment. 
 Let $h: (\Gamma_0 \cup \Delta) \rightarrow (\Gamma_0 \cup \{\flat\})$ be a function such that $h(\sigma)=\sigma$ for all $\sigma \in \Gamma_0$, and $h(\sigma)=\flat$ for all $\sigma \in \Delta$, where $\flat$ is a ``blank" symbol, not in $\Gamma_0 \cup \Delta$. 
 For any P1 node $N$, the \textit{P0 sequence seen by N} is created by starting with the empty list, and adding $h(\sigma)$ for each symbol $\sigma$ one encounters at P0 edges while  walking from the root node down to $N$.

A \textit{debate subtree} is a subtree of the debate tree where each P1 node has just one child, and each P0 node has $|\Gamma_0|+|\Delta|$ children. A debate subtree is said to be \textit{well-formed} if  two P1 nodes, say, $N_1$ and $N_2$, 
at the same level emit different symbols only if the P0 sequences seen by $N_1$ and $N_2$ are different. Intuitively, this well-formedness condition corresponds to our desire that P1's messages should reflect the level of ignorance that it has about the private messages of P0. 

Formally, a \textit{debate} is a sequence of symbols that labels an infinite path starting at the root node of a well-formed debate subtree.  The general definition we have given corresponds to \textit{partial-information} debates. When $\Delta=\emptyset$, (i.e. P0 never emits private symbols), one has a \textit{complete-information} debate. The other extreme, where P0 never emits public symbols ($\Gamma_0=\emptyset$), corresponds to \textit{zero-information} debates.

We will associate languages with debate systems in two different ways. We start with the ``strong" definition.

The probability that a verifier $V$ accepts an input string $w$ (i.e. ends up in $q_{a}$) as the result of watching a debate $\pi$ between P1 and P0 presented to it through the reading cells as described above is denoted by $P(a)^w_{(V,\pi)}$. $P(r)^w_{(V,\pi)}$ denotes the probability that $V$ rejects $w$ in such a scenario.

We say that language $L$ \textit{has a debate checkable with error probability} $\varepsilon$ if there exists a verifier $V$ such that
\begin{enumerate}
	\item for every $w \in L$, there is a well-formed debate subtree on which, for all debates $\pi$ labeling a path of this subtree, $P(a)^w_{(V,\pi)} \geq 1-\varepsilon$,  and,
	\item for every $w \notin L$, on all well-formed debate subtrees, there exists a debate $\pi$ labeling some path of the subtree such that $P(r)^w_{(V,\pi)} \geq 1-\varepsilon$.
\end{enumerate}

\indent When we replace item (2) above with the following condition, we obtain the ``weak definition" of debate checking: 
\begin{enumerate}
	\item [2]$ \mspace{-8mu}^{'} \mspace{-5mu}. $ for every $w \notin L$, on all well-formed debate subtrees, there exists a debate $\pi$ labeling some path of the subtree such that $P(a)^w_{(V,\pi)} \leq \varepsilon$.
	
\end{enumerate}

$\mathsf{CDEB}(s,t,r)$ is the class of languages that have complete-information debates checkable with some error probability $\varepsilon<\frac{1}{2}$, such that the verifier of the debate uses $s(n)$ space, $t(n)$ time, and $r(n)$ random bits. Classes of languages that have partial-information debates and zero-information debates with these restrictions have names of the form $\mathsf{PDEB}(s,t,r)$ and $\mathsf{ZDEB}(s,t,r)$, respectively. $\mathsf{CDEB}_w(s,t,r)$, $\mathsf{PDEB}_w(s,t,r)$, $\mathsf{ZDEB}_w(s,t,r)$ are the corresponding classes of languages recognized according to weak definition. Note that, since these ``weak" debate systems are less constrained than the ``strong" ones of the previous definition, the $\mathsf{xDEB}$ classes are always contained in the corresponding $\mathsf{xDEB}_w$ classes for $\mathsf{x \in \{C,Z,P\}}$. We use notations $cons$, $log$, $poly$, $exp$ to stand for functions in $O(1)$, $O(\log n)$, $O(n^c)$, $O(2^{n^c})$, respectively, for any constant $c$. If there is no restriction on a resource, we indicate this by $\infty$. When the randomness parameter is set to 0, this indicates that the verifier is deterministic.

We differentiate between the class of languages that have debates checkable with some error probability, and the class of languages that have debates checkable for \textit{all} positive error probabilities $\varepsilon < \frac{1}{2}$. The latter is denoted by adding an $^*$ at the end of the corresponding class name (e.g. $\mathsf{PDEB}^*(s,t,r)$ is the class of languages that have partial information debates for all positive error bounds $\varepsilon <\frac{1}{2} $). The $\mathsf{xDEB^*}$ classes are of course always contained in the corresponding $\mathsf{xDEB}$ classes.


\subsection{Relation to previous work}
\label{subsec:alternation}

The alternating Turing machines (ATMs) of \cite{CKS81} correspond precisely to deterministic verifiers reading complete-information debates in our terminology. The state set of an ATM is partitioned to sets of existential and universal states. In the parlance of Section \ref{subsec:model}, the existential states correspond to the set $R_1$, and the symbol sent by P1 and read from C1 corresponds to the ``existential choice" made by the ATM at that step. Universal choices correspond to P0 symbols in a similar way. Some well-known facts regarding ATMs can be phrased in our setup as follows:


\begin{fact} \label{fact:ASPACE}
$\mathsf{CDEB}(s(n),\infty,0)=\mathsf{ASPACE}(s(n)) = \mathsf{DTIME}(2^{O(s(n))})$ for any $s(n) \geq \log n$. \cite{CKS81}
\end{fact}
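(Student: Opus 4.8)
The plan is to treat the statement as two separate claims: the identity $\mathsf{CDEB}(s(n),\infty,0)=\mathsf{ASPACE}(s(n))$, which makes the informal ATM--debate correspondence precise, and the Chandra--Kozen--Stockmeyer equality $\mathsf{ASPACE}(s(n)) = \mathsf{DTIME}(2^{O(s(n))})$. Throughout I would use $s(n)\ge\log n$, so that input head positions can be recorded on a work tape. For the first identity, I would give an explicit translation in both directions. From a deterministic verifier $V$ for a complete-information debate ($\Delta=\emptyset$) one builds an ATM $M$: the states of $V$ in $R_1$ become existential, those in $R_0$ become universal, $M$ branches over $\Gamma_1$ (resp.\ $\Gamma_0$) exactly where $V$ would consult C1 (resp.\ C0), and every remaining transition of $V$ is copied verbatim; $M$ enters its accepting state iff $V$ halts in $q_a$. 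One checks that in the complete-information case every debate subtree is automatically well-formed and is nothing more than a strategy for P1 --- with $\Delta=\emptyset$ the ``P0 sequence seen'' by a node is the literal prefix of P0 symbols on the path to it, and in a debate subtree that prefix already determines the node, so the well-formedness clause is vacuous --- whence the existence of a winning subtree for $w\in L$ coincides with acceptance of $w$ by $M$.

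The one point needing care is a cheating player who emits an infinite string to keep $V$ running forever: since $V$ is deterministic and the error bound is below $\frac{1}{2}$, ``accepts with probability $\ge 1-\varepsilon$'' just means ``halts in $q_a$'', and after replacing $M$ by the standard equivalent ATM that counts steps up to the number of configurations and halts-and-rejects on any overlong path (legitimate because $s(n)\ge\log n$), non-acceptance becomes explicit rejection, which is exactly what the second clause of the strong definition of debate checking demands. Reading the construction backwards turns a halting $s(n)$-space ATM into a deterministic $s(n)$-space verifier reading a complete-information debate, with the time bound unconstrained on both sides; matching the $O(s(n))$ work-tape bounds is then routine.

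For $\mathsf{ASPACE}(s(n)) \subseteq \mathsf{DTIME}(2^{O(s(n))})$ I would build the configuration graph of the ATM, which has $2^{O(s(n))}$ nodes when $s(n)\ge\log n$, label each node existential / universal / accepting / rejecting, and compute the set of accepting configurations by the monotone fixpoint that repeatedly adds a configuration when it is an accepting halting configuration, or it is existential with some already-accepting successor, or it is universal with all of its successors already accepting. This stabilizes within $2^{O(s(n))}$ rounds, each round costing $2^{O(s(n))}$ work, so deciding whether the start configuration is accepting takes time $2^{O(s(n))}$.

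The substantive direction, which I expect to be the main obstacle, is $\mathsf{DTIME}(2^{O(s(n))}) \subseteq \mathsf{ASPACE}(s(n))$: simulating a DTM $M$ of running time $T=2^{O(s(n))}$ (hence using at most $T$ tape cells) by an ATM that uses only $O(s(n))$ space. The idea is to have the ATM verify assertions of the form ``cell $c$ of $M$'s computation tableau at time $t$ holds symbol $a$'', where the ``symbol'' also records whether $M$'s head sits on $c$ and, if so, the state; the coordinates $(c,t)$ fit in $O(\log T)=O(s(n))$ bits. For $t=0$ the assertion is checked directly against the input; for $t>0$ the ATM existentially guesses the contents of cells $c-1,c,c+1$ at time $t-1$, verifies in $O(1)$ space that these force $a$ at $(c,t)$ under $M$'s transition rule, then universally picks one of those three cells and continues, \emph{overwriting} $(c,t)$ with the chosen predecessor rather than pushing it on a stack. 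Because universal branching already ``keeps'' all the alternatives, no recursion stack is needed and the space stays $O(s(n))$ even though an individual branch may run $\Theta(T)$ steps; and since $t$ strictly decreases every round, every branch halts. The ATM starts by existentially guessing where the accepting state appears in the final configuration and invoking this verifier. The difficulty is precisely in obtaining an exponential-time but linear-space simulation, and the resolution is the classical observation that alternation lets one trade the recursion stack for universal branching. Chaining the three parts yields the asserted chain of equalities.
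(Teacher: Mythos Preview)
The paper does not actually prove this statement: it is listed as a \emph{Fact} with a citation to \cite{CKS81}, and the only argument the paper supplies is the informal paragraph preceding it, which explains that an ATM is literally a deterministic verifier reading a complete-information debate (existential states are $R_1$, universal states are $R_0$, the symbol read from the cell is the ATM's choice). Your treatment of the first equality $\mathsf{CDEB}(s(n),\infty,0)=\mathsf{ASPACE}(s(n))$ is a careful expansion of exactly that remark, and you go beyond what the paper says by explicitly checking that the well-formedness clause is vacuous when $\Delta=\emptyset$ and by handling the infinite-debate issue via a configuration counter; both points are correct and worth making.

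For the second equality $\mathsf{ASPACE}(s(n))=\mathsf{DTIME}(2^{O(s(n))})$ the paper simply defers to \cite{CKS81}. Your sketch reproduces the standard Chandra--Kozen--Stockmeyer argument: the monotone fixpoint on the configuration graph for the $\subseteq$ direction, and the tableau-cell verification with existential guessing of the three predecessor cells followed by universal selection (overwriting rather than stacking) for the $\supseteq$ direction. This is correct and is the classical proof; there is nothing in the paper to compare it against beyond the citation.
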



\begin{fact}
$\mathsf{CDEB}(\infty,t(n),0)=\mathsf{ATIME}(t(n)) = \mathsf{DSPACE}(t(n))$ for any $t(n) \geq n$. \cite{CKS81}
\end{fact}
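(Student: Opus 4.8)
The plan is to read the claimed chain of equalities as two separate statements and dispatch each. The first, $\mathsf{CDEB}(\infty,t(n),0)=\mathsf{ATIME}(t(n))$, is essentially a restatement of the dictionary already set up in Section~\ref{subsec:alternation}; the second, $\mathsf{ATIME}(t(n))=\mathsf{DSPACE}(t(n))$ for $t(n)\ge n$, is the Chandra--Kozen--Stockmeyer theorem, whose proof I would only recall in outline.

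For the first equality I would spell out the correspondence in both directions. A deterministic complete-information debate checker is an ATM: states in $R_1$ are existential states, a symbol delivered through C1 is the existential choice made there, and $R_0$ together with C0 plays the same role for universal states. Since $r=0$, the acceptance/rejection clauses of the debate definition collapse to the usual deterministic ones (each $P(a)^w_{(V,\pi)}$ is $0$ or $1$), and since $\Delta=\emptyset$ a well-formed complete-information debate subtree is exactly a strategy for P1 (one child per P1 node, all children kept at P0 nodes, and the well-formedness condition merely says P1's move depends only on the public history it has seen). Under this dictionary the condition ``$w\in L$ iff P1 has a subtree all of whose paths are accepting'' is literally the ATM acceptance condition, and the time charged to the verifier equals the time charged to the ATM; a cheating player's option of emitting an infinite message is vacuous once a $t(n)$-step time bound is in force. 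Hence the two classes coincide with no overhead.

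For the second equality I would recall the two standard simulations. To get $\mathsf{ATIME}(t(n))\subseteq\mathsf{DSPACE}(O(t(n)))$, evaluate the depth-$t(n)$ computation tree of the ATM by depth-first search, storing not configurations but the stack of nondeterministic choices made along the current root-to-node path ($O(1)$ bits per level, $O(t(n))$ in total) and recomputing the current configuration on demand by replaying the machine from the start input with those fixed choices, in one block of $O(t(n))$ cells. To get $\mathsf{DSPACE}(t(n))\subseteq\mathsf{ATIME}(O(t(n)^2))$, run a Savitch-style recursion on the configuration graph of the space-$t(n)$ machine (which has $2^{O(t(n))}$ nodes, each of size $O(t(n))$): to decide reachability in at most $2^k$ steps, existentially write a midpoint configuration and then universally verify the two halves within $2^{k-1}$ steps each; the recursion has depth $O(t(n))$ and each level costs $O(t(n))$ time to write a configuration, for $O(t(n)^2)$ total.

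The only delicate point is the quadratic overhead in the second simulation, so ``equality'' should be understood with the paper's conventions, exactly as in the companion $\mathsf{ASPACE}$ fact: $t(n)$ is meant to range over a class closed under polynomial slowdown (which holds for $poly$ and $exp$), and in particular $\mathsf{ATIME}(poly)=\mathsf{PSPACE}=\mathsf{DSPACE}(poly)$ on the nose, which is all we will actually invoke later. I do not expect a genuine obstacle here; the only thing requiring care is checking that the debate-to-ATM translation is faithful on the acceptance and rejection conditions simultaneously, including the degenerate shape of well-formed debate subtrees when $\Delta=\emptyset$.
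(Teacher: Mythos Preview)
The paper does not prove this statement at all: it is listed as a Fact with a citation to \cite{CKS81} and nothing more, so there is no ``paper's own proof'' to compare your proposal against. Your sketch is the standard one---the first equality is exactly the dictionary the paper sets up in Section~\ref{subsec:alternation}, and the second is the Chandra--Kozen--Stockmeyer theorem---and your flag about the quadratic overhead in the $\mathsf{DSPACE}\subseteq\mathsf{ATIME}$ direction is appropriate, since the equality is meant at the level of polynomially closed resource classes, consistent with how the paper actually uses the fact.
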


Reif \cite{Re79} introduced a natural extension to alternation by giving computational models for games of incomplete information, where the existential player does not have access to all of the opponent's moves. To incorporate this notion to the alternating Turing machine framework, Reif augmented the ATM model with \textit{private work tapes} and \textit{private states} in addition to the usual tapes and states that are common to both ``players." In a \textit{private alternating Turing machine} (PATM), only the universal states can execute moves that can see or change the content of the private tapes and states. This prevents the strategy of the existential player from depending on the content of private tapes and states, effectively enforcing the same condition mentioned in our definition of well-formed debate subtrees on the existential choices. When the  universal states of a PATM are forbidden to make moves that change the common memory elements, we get a \textit{blind alternating Turing machine} (BATM). The equivalence of these two models to our setup with deterministic verifiers reading partial and zero-information debates, respectively, is demonstrated in the Appendix. We know the following facts about the language recognition power of PATMs and BATMs. 

\begin{fact}
\begin{tabular}{@{}l@{}}
$\mathsf{PDEB}(s(n),\infty,0)=\mathsf{PASPACE}(s(n)) = \mathsf{DTIME}(2^{2^{O(s(n))}})$, \\
$\mathsf{ZDEB}(s(n),\infty,0)=\mathsf{BASPACE}(s(n)) = \mathsf{DSPACE}(2^{O(s(n))})$ for any $s(n) \geq \log n$. \cite{Re79}
\end{tabular}
\end{fact}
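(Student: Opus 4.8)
The two equalities $\mathsf{PDEB}(s(n),\infty,0)=\mathsf{PASPACE}(s(n))$ and $\mathsf{ZDEB}(s(n),\infty,0)=\mathsf{BASPACE}(s(n))$ are precisely the model-equivalence statements proved in the Appendix, so the substance of the fact lies in the two machine-independent characterizations. For $\mathsf{PASPACE}(s(n)) \subseteq \mathsf{DTIME}(2^{2^{O(s(n))}})$ the plan is a \emph{belief-state} (powerset) construction. A private ATM with space bound $s(n)$ has at most $2^{O(s(n))}$ distinct common configurations and $2^{O(s(n))}$ distinct private configurations. Since only universal states may inspect or alter the private part, a legal strategy for the existential player may depend only on the sequence of common configurations it has witnessed; hence its best continuation from any point is a function of the current common configuration together with the \emph{set} of private configurations consistent with the visible history so far. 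There are at most $2^{O(s(n))}\cdot 2^{2^{O(s(n))}}=2^{2^{O(s(n))}}$ such ``belief states,'' and over this (exponentially inflated, but finite) arena the game is one of perfect information, so an ordinary backward-induction / attractor computation decides the winner in time polynomial in the number of states, i.e.\ in time $2^{2^{O(s(n))}}$.

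For the reverse inclusion $\mathsf{DTIME}(2^{2^{O(s(n))}}) \subseteq \mathsf{PASPACE}(s(n))$ --- which I expect to be the main obstacle --- I would use Fact~\ref{fact:ASPACE} to rewrite the target as $\mathsf{ASPACE}(2^{O(s(n))})$ and then show that an $O(s(n))$-space private ATM can simulate an ordinary ATM $B$ that uses space $2^{O(s(n))}$. The obstruction is that a single configuration of $B$ already has $2^{O(s(n))}$ cells and cannot be written on an $O(s(n))$-space work tape. The idea (following Reif) is to keep the current configuration of $B$ only \emph{implicitly}: the existential player undertakes to report, on demand, the symbol in any requested cell, while the universal player holds the requested cell index --- and an index it needs for a later consistency check --- on its \emph{private} tape (a single such index occupies only $O(s(n))$ bits). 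Each step of $B$ then becomes a short exchange --- the existential player claims the contents of three neighbouring cells of the current configuration, the universal player challenges one of them --- and each existential or universal branch of $B$ is realized by a corresponding branch of the simulating PATM. Because the existential player never sees which cell it is being asked about, its answers are forced to cohere into one genuine configuration of $B$; this forced global consistency realizes a universal quantifier over a $2^{O(s(n))}$-bit object in only $O(s(n))$ space, and it is the one extra existential-over-exponential layer that separates $2^{O(s(n))}$-time alternating simulation from $2^{2^{O(s(n))}}$-time simulation. The delicate point is the bookkeeping: arranging which indices the universal player must retain, ensuring they always fit in $O(s(n))$ space, and making sure no information about them leaks back to the existential player.

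The blind case runs parallel but one exponential lower. For $\mathsf{BASPACE}(s(n)) \subseteq \mathsf{DSPACE}(2^{O(s(n))})$, note that a blind existential player observes nothing whatsoever, so a strategy for it is simply a fixed sequence of moves, chosen with no regard for the universal player's play. Hence the machine accepts iff there is a move sequence under which the set of configurations reachable against \emph{all} universal responses is entirely accepting. That reachable set is a subset of the $2^{O(s(n))}$-element configuration space, so it fits in $2^{O(s(n))}$ bits and can be advanced one step at a time; ``does a winning sequence exist'' is then a reachability question in the powerset graph, which has $2^{2^{O(s(n))}}$ vertices but whose vertices each occupy only $2^{O(s(n))}$ bits, and so it is decidable nondeterministically in space $2^{O(s(n))}$ and therefore, by Savitch's theorem, deterministically in space $2^{O(s(n))}$. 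For the converse $\mathsf{DSPACE}(2^{O(s(n))}) \subseteq \mathsf{BASPACE}(s(n))$ I would have the blind existential player guess an accepting computation of a $2^{O(s(n))}$-space deterministic machine cell by cell with no feedback, while the universal player privately picks a coordinate and verifies local consistency between successive configurations --- exactly as in the private case, except that blindness now prevents the existential player from using feedback to climb a further exponential, pinning the simulation at single-exponential space. Together with the two $\mathsf{xDEB}$/$\mathsf{xASPACE}$ equivalences, these four inclusions give the stated identities.
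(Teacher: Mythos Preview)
The paper does not prove this statement; it is stated as a \emph{Fact}, with the $\mathsf{PASPACE}$ and $\mathsf{BASPACE}$ characterizations attributed to Reif~\cite{Re79}, and only the model-equivalences $\mathsf{PDEB}(s(n),\infty,0)=\mathsf{PASPACE}(s(n))$ and $\mathsf{ZDEB}(s(n),\infty,0)=\mathsf{BASPACE}(s(n))$ are actually established here (in the Appendix), exactly as you observe in your first sentence. Your proposal therefore goes further than the paper does, reconstructing the content of the cited reference rather than merely invoking it.

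That said, your sketches are faithful to the standard arguments. The belief-state (powerset) upper bound for $\mathsf{PASPACE}$ and the subset-reachability-plus-Savitch argument for $\mathsf{BASPACE}$ are essentially Reif's and Peterson--Reif's reasoning, and your converses --- having the existential player commit to an exponentially long object cell by cell while the universal player verifies at privately chosen positions --- capture the central mechanism by which hiding information buys one exponential in space. The paper itself later leans on this same circle of ideas (the proof of Lemma~\ref{CDEBwinCDEB} uses the belief-state/powerset viewpoint for the upper bound on debate length and invokes, from~\cite{PR79}, the fact that a BATM can count to $2^{2^{s(n)}}$ in space $s(n)$), so your outline meshes well with the surrounding text. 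Your caveat that the bookkeeping in the $\mathsf{DTIME}(2^{2^{O(s(n))}})\subseteq\mathsf{PASPACE}(s(n))$ direction is the delicate part is well placed; that is indeed where the work in Reif's proof lies, and a full write-up would need to spell out the recursion that lets the private player audit consistency across all levels of the simulated ATM's computation without ever holding more than $O(s(n))$ bits.
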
 

\begin{fact}\label{fact:patime}
\begin{tabular}{@{}l@{}}
$\mathsf{PDEB}(\infty,t(n),0)=\mathsf{PATIME}(t(n)) = $ \\ 
$\mathsf{ZDEB}(\infty,t(n),0)=\mathsf{BATIME}(t(n)) = \mathsf{DSPACE}(t(n))$ for any $t(n) \geq n$. \cite{Re79}
\end{tabular}
\end{fact}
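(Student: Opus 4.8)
The statement is largely assembly once the right pieces are in place. The two equalities involving debate classes are the machine-model equivalences established in the Appendix: a deterministic verifier reading a partial-information debate is precisely a PATM, and one reading a zero-information debate is precisely a BATM, so $\mathsf{PDEB}(\infty,t(n),0)=\mathsf{PATIME}(t(n))$ and $\mathsf{ZDEB}(\infty,t(n),0)=\mathsf{BATIME}(t(n))$. Combining this with the Chandra--Kozen--Stockmeyer equality $\mathsf{DSPACE}(t(n))=\mathsf{ATIME}(t(n))$ recalled in the earlier Facts, it suffices to prove the cycle
\[
\mathsf{DSPACE}(t(n))\ \subseteq\ \mathsf{BATIME}(t(n))\ \subseteq\ \mathsf{PATIME}(t(n))\ \subseteq\ \mathsf{DSPACE}(t(n)),
\]
understanding, as is already implicit in the preceding Facts, that for $t(n)\ge n$ these bounds are robust up to a polynomial, so any polynomial blow-up incurred in a simulation is harmless. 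The middle inclusion is immediate: a BATM is a PATM with the extra restriction that universal moves cannot change the common memory, so every blind-alternating computation is already a private-alternating one.

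For $\mathsf{PATIME}(t)\subseteq\mathsf{DSPACE}(t)$ I would view a $t$-time PATM as a game of incomplete information of depth $O(t)$ and decide, in space $O(t)$, whether its existential player has a winning strategy. The tool is the subset construction: a position of the equivalent complete-information game is the part of the configuration visible to the existential player together with the set $S$ of private configurations consistent with the public history so far, and the well-formedness requirement -- that the existential strategy depend only on the public history -- is automatically respected on this reduced game. One evaluates the reduced game by a depth-$O(t)$ recursive traversal, alternately guessing the existential player's next move and branching universally over the opponent's next public move, and at a leaf one must decide whether every private continuation leaves the verifier accepting. The only hazard is that $S$ may have $2^{O(t)}$ members, so it is never written out in full: $S$ is kept implicitly, as the public history plus the existential player's moves along the current path (both of length $O(t)$), and each query ``$c\in S$?'' as well as the leaf test ``is no rejecting verifier configuration reachable by a compatible private play?'' is a (co-)reachability question on the $2^{O(t)}$-node graph whose nodes are the possible private portions of the verifier's configuration, hence solvable in space $O(t^2)$ via Savitch. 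The traversal stack and retained history are $O(t)$, so the simulation fits in space $O(t^2)$, which is $\mathsf{DSPACE}(t)$ up to the polynomial slack we have allowed. The point to notice is that it is precisely the exponential cardinality of $S$ -- fatal if one is paying in the \emph{space} of the simulated machine, harmless when one pays in its \emph{time} -- that is responsible for the double-exponential jump $\mathsf{PASPACE}(s)=\mathsf{DTIME}(2^{2^{O(s)}})$ in the earlier Fact.

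The inclusion I expect to be the real obstacle is the lower bound $\mathsf{DSPACE}(t)\subseteq\mathsf{BATIME}(t)$. Starting from a deterministic machine $M$ using space $t$ -- equivalently, by Chandra--Kozen--Stockmeyer, a $t$-time ATM for the same language -- the natural route is the recursion $\mathrm{REACH}(c_1,c_2,k)\equiv\exists c_m\,[\mathrm{REACH}(c_1,c_m,k-1)\wedge\mathrm{REACH}(c_m,c_2,k-1)]$ on $M$'s configuration graph, with $O(t)$ levels, which is exactly how $\mathsf{DSPACE}(t)\subseteq\mathsf{ATIME}(t^{O(1)})$ is proved. The difficulty is that once the universal player has chosen which half to descend into, the \emph{blind} existential player does not see that choice and cannot tell which sub-instance it is now responsible for. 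Overcoming this requires the existential player to pre-commit, on the common tape, enough of the recursion -- its midpoint guesses together with a guess of the $O(t)$ left/right decisions, from which it can lay out in advance every sub-instance it might be asked about -- so that the universal player, who sees everything, can both confirm an honest run and punish a dishonest one; the punishment step is the delicate one, precisely because a blind existential player cannot react once the universal player deviates from the pre-committed decision path, so the gadget must let the universal player carry out the needed contradiction entirely within its private memory. Verifying that such a gadget is simultaneously complete and sound, and costs only polynomial-in-$t$ overhead, is where the bulk of the argument lies; it is also what distinguishes the time-bounded regime, where private information buys nothing, from the space-bounded regime, where it buys a full exponential.
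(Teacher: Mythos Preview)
The paper does not prove this statement; it is cited as a known result of Reif. The only piece argued in the paper itself is the model equivalence $\mathsf{PDEB}(\infty,t,0)=\mathsf{PATIME}(t)$ and $\mathsf{ZDEB}(\infty,t,0)=\mathsf{BATIME}(t)$, handled by the two Appendix lemmas, which you correctly point to. There is thus no ``paper's own proof'' of the equalities with $\mathsf{DSPACE}(t)$ to compare against; what you have written is an attempted reconstruction of Reif's theorem.

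In that reconstruction, your upper bound $\mathsf{PATIME}(t)\subseteq\mathsf{DSPACE}(t)$ via the subset construction together with Savitch for the implicit set $S$ is along the right lines. The genuine gap is the lower bound $\mathsf{DSPACE}(t)\subseteq\mathsf{BATIME}(t)$, and the gadget you sketch cannot be completed as described. If the blind existential player pre-commits to \emph{one} guessed sequence of $O(t)$ left/right bits together with midpoints along that single path, soundness already fails: even when the root claim $\mathrm{REACH}(c_{\mathrm{start}},c_{\mathrm{acc}},2^t)$ is false, at every internal node at least one of the two children $\mathrm{REACH}(a,m,k)$, $\mathrm{REACH}(m,b,k)$ can be made true by a suitable choice of $m$, so the existential player --- who in your protocol chooses both $m$ and the next bit --- can always descend into a true child and arrive at a valid one-step leaf. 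A universal player that follows the committed path must then accept. Your escape is to have the universal player deviate and ``carry out the contradiction entirely within its private memory,'' but after deviating it holds no committed midpoints for the branch it actually took, and a $\mathrm{poly}(t)$-time co-nondeterministic private computation cannot on its own refute a $\mathrm{REACH}$ claim spanning $2^{\Theta(t)}$ steps. Committing midpoints for \emph{all} branches is $2^{\Theta(t)}$ bits and so outside the time budget. What is missing is exactly how to compress the exponential recursion tree into a polynomial-size existential commitment that a blind universal player can check; that is the substance of Reif's construction and does not fall out of the Savitch template you start from.
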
 

An interactive proof system (IPS) can be seen as a variant of our model where P0 says nothing, and the verifier is allowed to exchange messages with P1. An IPS with perfect completeness, that is, a guarantee that members of the language are accepted with probability 1, can be transformed to a PATM with the same space and time bounds by just viewing the the coin-tosses of the verifier as private universal moves, and the branchings due to the prover messages as possible existential moves of the PATM. By combining the facts that every IPS can be assumed to have perfect completeness \cite{FGMSZ89}, and that $\mathsf{PSPACE}$ equals the class of languages that have IPSs with polynomial-time logspace verifiers \cite{Co91,Sh92} with Fact \ref{fact:patime}, we conclude the following about simultaneously space and time bounded PATMs:
\begin{fact}\label{fact:pdeblogpoly0}
$\mathsf{PDEB}(log,poly,0)=\mathsf{PSPACE}$. 
\end{fact}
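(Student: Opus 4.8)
The statement is an equality, so I would prove the two inclusions separately. The inclusion $\mathsf{PDEB}(log,poly,0)\subseteq\mathsf{PSPACE}$ falls out of a fact already established, while $\mathsf{PSPACE}\subseteq\mathsf{PDEB}(log,poly,0)$ is the substantive direction and is exactly where the $\mathsf{IP}=\mathsf{PSPACE}$ machinery is needed.

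For the easy inclusion I would note that relaxing the space bound can only enlarge the class, so $\mathsf{PDEB}(log,poly,0)\subseteq\mathsf{PDEB}(\infty,poly,0)$, and then invoke Fact~\ref{fact:patime}, which gives $\mathsf{PDEB}(\infty,poly,0)=\mathsf{PATIME}(poly)=\mathsf{DSPACE}(poly)=\mathsf{PSPACE}$.

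For the hard inclusion, take $L\in\mathsf{PSPACE}$. By the logarithmic-space form of $\mathsf{IP}=\mathsf{PSPACE}$ (\cite{Co91,Sh92}), $L$ has an interactive proof system whose verifier uses $O(\log n)$ space and polynomial time, and by \cite{FGMSZ89} I may assume the system has perfect completeness with the same bounds. I would then design a \emph{deterministic} debate verifier $V$: P1 plays the prover (P1's symbols are the prover's messages); P0's public symbols (from $\Gamma_0$) carry the messages the IPS verifier would send to the prover, so that P1 can adapt to them exactly as the real prover does; and P0's private symbols (from $\Delta$) carry the random bits the IPS verifier would toss. $V$ re-simulates the IPS verifier, streaming the transcript off cells C1 and C0, keeping only its $O(\log n)$-bit configuration on the work tape, running in polynomial time, and tossing no coins of its own; if at any point the public ``verifier message'' offered by P0 is inconsistent with the private coin bits read so far (or P0 emits garbage or an infinite message), $V$ rules against P0 and accepts. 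Perfect completeness then gives, for $w\in L$, a well-formed P1 strategy — the honest prover, whose messages depend only on the public part of P0's transcript, so the well-formedness condition holds — on which every debate is accepted with probability $1$: an honest P0 triggers acceptance because the IPS accepts for \emph{all} coin sequences, a cheating P0 triggers acceptance because $V$ catches it. Soundness gives, for $w\notin L$ and every P1 strategy, a coin sequence (hence a choice of P0's private symbols together with the matching public messages) on which $V$ rejects with probability $1$, which is condition~(2) of the definition. Hence $L\in\mathsf{PDEB}(log,poly,0)$.

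The step I expect to be the real obstacle is keeping $V$ within $O(\log n)$ space. The debate is one-way — each reading cell delivers a symbol once, and $V$ cannot send anything back to P1 — so a naive simulation would buffer whole messages and use polynomial space. Getting down to logarithmic space forces the underlying protocol to be the arithmetized sum-check protocol with linearization, so that $V$'s running memory is essentially a single field element over a polynomial-size field, and forces P0 to re-present each challenge at the moment $V$ needs it (so that $V$ can, for instance, evaluate an incoming low-degree polynomial at that challenge while streaming it), with $V$ remembering one field element to verify that the re-presented challenge agrees with the one it used. This is precisely what is bundled into the cited logarithmic-space version of $\mathsf{IP}=\mathsf{PSPACE}$; a more self-contained alternative would be to run this sum-check debate directly for $\mathsf{TQBF}$ and then close $\mathsf{PSPACE}$ under logarithmic-space reductions, recomputing bits of the reduced instance on demand. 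The remaining points — well-formedness of the honest strategy, the treatment of a cheating or non-halting P0, and the arithmetic bookkeeping — are routine.
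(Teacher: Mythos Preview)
Your argument is correct and follows the same route the paper sketches in the paragraph preceding Fact~\ref{fact:pdeblogpoly0}: the upper bound via Fact~\ref{fact:patime}, and the lower bound by reinterpreting a perfect-completeness logspace IPS (from \cite{FGMSZ89,Co91,Sh92}) as a deterministic partial-information debate in which the verifier's private coins become P0's private symbols and the prover's messages become P1's symbols. The only point on which you are more explicit than the paper is the handling of the IPS verifier's outgoing messages: you have P0 replay them over the public alphabet $\Gamma_0$ and have $V$ check consistency, whereas the paper phrases the transformation in PATM language, where this role is played automatically by the common tape (the universal player computes there using the private coins, and the existential player reads the result). Your discussion of the logspace obstacle is apt and is exactly what the citation of \cite{Co91} is doing for the paper.
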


Constant-space versions of alternating machines have also been studied: ($\mathsf{REG}$ denotes the class of regular languages.)

\begin{fact}
\begin{tabular}{@{}l@{}}
$\mathsf{CDEB}(cons, \infty ,0)= \mathsf{REG}$, \cite{LLS78}\\ 
$\mathsf{ZDEB}(cons, \infty ,0)= \mathsf{NSPACE}(n)$, \cite{PR79} \\
$\mathsf{PDEB}(cons, \infty ,0)= \mathsf{E}$. \cite{PR79} \\
\end{tabular}
\end{fact}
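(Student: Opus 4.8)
These are three separate equalities, and for each I would prove the two inclusions independently: the ``lower bound'' (every language in the class on the right has a debate of the stated type checkable by a constant-space, zero-randomness verifier) and the ``upper bound'' (every such debate-checkable language lies in the class on the right). The single observation underlying all of them is that a constant-space verifier running on an input of length $n$ has only $O(n)$ distinct configurations (a state together with an input-head position); it is really a two-way finite-state device whose ``board'' $\mathcal{C}$ has size $O(n)$, and the two-way input head functions as a ruler that can address positions up to $O(n)$. What separates the three answers is whether the refuter is allowed private symbols: a private symbol lets the refuter \emph{hide} from the prover which part of a long prover message it is about to scrutinize, and, as will become clear, the ability (in the zero- and partial-information cases) or inability (in the complete-information case) to do this is exactly what makes the classes so different.

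For $\mathsf{CDEB}(cons,\infty,0)=\mathsf{REG}$: the inclusion $\mathsf{REG}\subseteq\mathsf{CDEB}(cons,\infty,0)$ is trivial, since a verifier may decline ever to enter a reading state and simply run a DFA on its input. For the reverse inclusion, note that a deterministic constant-space verifier reading a complete-information debate is precisely a two-way alternating finite automaton, and I would show that such machines accept only regular languages by the classical behavior-function (crossing-sequence) argument: assign to each finite word $u$ a finite object $T_u$ recording, for every way the automaton can cross into $u$ at either boundary, the resulting ``game value'' (eventual acceptance / crossing behavior), prove that the number of possible $T_u$ is bounded by a function of the state set alone, that $T_{uv}$ is computable from $T_u$ and $T_v$, and that acceptance of $u$ is determined by $T_u$; then $u\mapsto T_u$ has finite image, the accepted language is a union of its fibres, and is therefore regular. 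The one genuinely delicate point here is defining $T_u$ in the alternating setting, where the relevant quantity is the value of a two-player game played over the back-and-forth crossings of the boundaries of $u$ and has to be pinned down by a fixed-point computation; intuitively the prover cannot help the verifier escape regularity because its hints are consumed by a finite-state mechanism, and any ``challenge'' the refuter might have used to cross-check a super-constant-size prover message is, in a complete-information debate, visible to the prover, which can then tailor its future messages to pass exactly that challenge while cheating elsewhere.

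For the lower bounds of $\mathsf{ZDEB}(cons,\infty,0)=\mathsf{NSPACE}(n)$ and $\mathsf{PDEB}(cons,\infty,0)=\mathsf{E}$: take a machine $M$ for $L$ that uses linear space --- a nondeterministic Turing machine in the first case, and, using Fact~\ref{fact:ASPACE} ($\mathsf{E}=\mathsf{DTIME}(2^{O(n)})=\mathsf{ASPACE}(n)$), an alternating one in the second. The prover commits, move by move, to an encoding of a computation of $M$ on $w$: a single path of configurations in the nondeterministic case, and, in the alternating case, a branch of the computation tree along which the refuter supplies the universal choices (and the prover the existential ones). Each configuration has length $O(n)$, so the verifier can use its two-way input head as a ruler to verify that the first configuration is initial, that some configuration is accepting, and --- the only part needing help --- that each configuration follows from its predecessor, the refuter challenging a single cell at which the local transition rule is allegedly violated; checking one such cell needs only a constant amount of memory beyond the head position. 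The refuter's private symbols are used to keep the \emph{index} of the challenged cell hidden from the prover, so the prover, not knowing which of the $2^{O(n)}$ possible cells will be inspected, is forced to make \emph{all} of them consistent, i.e.\ to simulate $M$ faithfully; this is Reif's information-hiding device, and it is precisely what a complete-information debate cannot do. In the zero-information case the prover is moreover blind to everything the refuter sends, so it cannot react at all: its whole strategy is a single fixed string, which is why one obtains only the nondeterministic power $\mathsf{NSPACE}(n)$ and not the full alternating power.

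For the upper bounds, analyse what the prover can base its moves on: only the public part of the debate, hence only on its ``belief'' --- the set of verifier configurations consistent with what it has seen, a subset of the $O(n)$-element board $\mathcal{C}$. In the zero-information case the prover sees nothing, so its strategy is just a string, and $L\in\mathsf{NSPACE}(n)$ by nondeterministically \emph{guessing} this string one symbol at a time while maintaining, in space $O(n)$, the set of configurations currently reached over all of the refuter's (universal) replies --- accept once this set is empty, reject if it ever reaches a rejecting configuration or admits a cycle with no further prover move (both detectable in the $O(n)$-vertex configuration graph). In the partial-information case the prover does receive feedback through the refuter's public symbols, so the debate becomes a bona fide complete-information reachability game whose vertices are the $2^{O(n)}$ possible beliefs; such a finite game is solvable in time polynomial in its size, i.e.\ $2^{O(n)}$, so $L\in\mathsf{DTIME}(2^{O(n)})=\mathsf{E}$. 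The hardest part of the whole argument is, on the lower-bound side, engineering the debate so that a ``global'' consistency property of an object of size $2^{O(n)}$ is reduced to one local check that a constant-space two-way-head verifier can perform --- most delicate for the $\mathsf{E}$ bound, where one must also make the refuter's hidden choice actually force a faithful simulation of a linear-space alternating machine --- and, on the upper-bound side, the belief-set analysis together with the explanation it provides for why blindness costs the prover exactly the gap between $\mathsf{NSPACE}(n)$ and $\mathsf{E}$.
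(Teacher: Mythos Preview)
The paper does not prove this statement --- it is quoted as a known Fact with citations to \cite{LLS78} and \cite{PR79} --- so there is no in-paper proof to compare against.  Your sketch is a correct reconstruction of the original arguments: the behavior-function (crossing-sequence) method is how Ladner--Lipton--Stockmeyer handle two-way alternating finite automata; Peterson and Reif's lower-bound protocols do have the prover stream linear-size configurations while the refuter secretly selects the cell to audit and the verifier uses its two-way input head as a ruler; and their upper bounds proceed exactly via the subset/belief-set reductions you describe (a single reachable-set in the blind case, a game on the $2^{O(n)}$ belief sets in the private case).

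Two small points would sharpen the write-up.  In the $\mathsf{ZDEB}\subseteq\mathsf{NSPACE}(n)$ simulation, the phrase ``accept once this set is empty'' presumably means that accepting configurations are discarded from the reachable set as they halt; you should also add an $O(n)$-bit step counter (the number of distinct reachable sets is $2^{O(n)}$) so the simulator always terminates, since otherwise a prover with no winning strategy may still keep every branch of your nondeterministic machine running forever.  On the lower-bound side, the mechanics of having a constant-space verifier locate the \emph{same} cell position $p$ in two successive prover-supplied configurations using only one two-way head --- the head must simultaneously serve as the position counter within a configuration and carry the record of $p$ across the configuration boundary --- is the genuinely delicate engineering you allude to and is worth spelling out explicitly, since it is the one place a reader might doubt that constant space suffices.
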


It is well known that multiple input heads are equivalent to logarithmic space \cite{Ha72}, for instance, the class of languages recognized by deterministic multihead finite automata is precisely $\mathsf{L}$. This equivalence carries over to the nondeterministic, probabilistic \cite{Ma97}, and alternating versions of these machines: Alternating multihead finite automata with two-way access to the input (denoted 2afa($k$) for $k$ input heads) are known to characterize alternating logspace: ($\mathsf{2AFA}(k)$ denotes the class of languages recognized by 2afa($k$)s.)

\begin{fact}\label{fact:2afakp}
$\cup_{k \geq 1}\mathsf{2AFA}(k) = \mathsf{P}$. \cite{Ki88}
\end{fact}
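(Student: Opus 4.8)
The plan is to prove the two inclusions separately, invoking Fact~\ref{fact:ASPACE} with $s(n)=\log n$, which gives $\mathsf{ASPACE}(\log n)=\mathsf{DTIME}(2^{O(\log n)})=\mathsf{P}$.

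For $\bigcup_{k\geq1}\mathsf{2AFA}(k)\subseteq\mathsf{P}$, I would fix a 2afa($k$) $M$ and an input $w$ of length $n$, and observe that a configuration of $M$ on $w$ is just a state of $M$ together with a tuple of $k$ head positions in $\{0,1,\dots,n+1\}$, so there are only $|Q|\cdot(n+2)^{k}=n^{O(1)}$ of them. The transition relation makes these configurations into an AND/OR graph (OR nodes at existential states, AND nodes at universal states), and $M$ accepts $w$ iff the initial configuration is a winning position in the usual alternating sense. I would then compute the set of winning configurations by the standard least-fixpoint procedure --- mark the accepting halting configurations, and repeatedly propagate the label ``winning'' backwards along AND/OR edges until stabilization --- which takes at most a polynomial number of rounds, each polynomial-time, hence polynomial time overall. (Configurations from which $M$ never halts stay unmarked, matching the convention that an infinite play is a win for the refuter side.) This puts $L(M)$ in $\mathsf{P}$.

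For $\mathsf{P}\subseteq\bigcup_{k\geq1}\mathsf{2AFA}(k)$, I would start from $L\in\mathsf{P}=\mathsf{ASPACE}(\log n)$ with an alternating Turing machine $A$ deciding $L$ in space $c\log n$, and simulate $A$ by a two-way alternating multihead automaton. The $O(\log n)$ bits of $A$'s work tape, together with its work-tape head position, get encoded by a constant number of integer counters each bounded by $n^{O(1)}$; each counter is then represented by the positions of a constant number of input heads relative to the end-markers, and the operations needed to simulate one step of $A$ (increment, decrement, zero-test, and the comparisons/updates implementing $A$'s transition function) are carried out by moving these heads while using auxiliary ``scratch'' heads as a ruler over the input. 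This is exactly the Hartmanis-style space-to-heads simulation \cite{Ha72}, and since it is oblivious to the type of the step being simulated, each simulated step can be assigned an existential or universal state in accordance with $A$, so the alternation structure is preserved verbatim. With a large enough constant number $k$ of heads (depending only on $A$ and $c$), this yields a 2afa($k$) recognizing $L$.

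The main obstacle is the counter simulation in the second inclusion: I expect the bookkeeping to be the point needing care, both in confirming that every arithmetic/comparison subroutine on the head-encoded counters uses only finitely many states and auxiliary heads, and --- crucially here --- in arranging that these subroutines are purely deterministic interludes, so that each existential (resp.\ universal) branching of the automaton maps to an existential (resp.\ universal) branching of $A$ and no spurious alternations creep in. Reconciling $A$'s possible nonhalting behaviour with the multihead model's acceptance convention will also need a brief remark, though no new idea.
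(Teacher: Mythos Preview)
The paper does not give its own proof of this statement: it is stated as a \emph{Fact} with a citation to \cite{Ki88}, preceded only by the informal remark that the Hartmanis equivalence of multiple heads and logarithmic space carries over to alternation, so that 2afa($k$)s characterize alternating logspace. Combined with Fact~\ref{fact:ASPACE} this is exactly the route you take for the inclusion $\mathsf{P}\subseteq\bigcup_{k\geq1}\mathsf{2AFA}(k)$, and your direct AND/OR-graph fixpoint argument for the converse is the standard one (and is equivalent to observing that a 2afa($k$) is a special case of an $O(\log n)$-space ATM and again invoking Fact~\ref{fact:ASPACE}). Your sketch is correct; there is simply no in-paper proof to compare it against beyond that one-sentence gloss.
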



When the input heads are defined as resources private to the universal player, and state information is used for public communications, it can be shown, again with the technique of \cite{Ha72}, that blind and private alternating multihead finite automata (2bafa($k$)s and 2pafa($k$)s, respectively,) are equivalent to the corresponding logspace models. We name the $k$-head classes associated with these machines $\mathsf{2BAFA}(k)$ and $\mathsf{2PAFA}(k)$, respectively.

\begin{fact}
\begin{tabular}{@{}l@{}}
$\cup_{k \geq 1}\mathsf{2BAFA}(k) = \mathsf{BASPACE}(log)=\mathsf{PSPACE}$, \\
$\cup_{k \geq 1}\mathsf{2PAFA}(k) = \mathsf{PASPACE}(log)=\mathsf{EXPTIME}$.
\end{tabular}
\end{fact}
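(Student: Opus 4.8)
The plan is to prove the statement in two stages. First I would establish the ``multihead automaton equals logspace'' identities $\cup_{k\geq 1}\mathsf{2BAFA}(k)=\mathsf{BASPACE}(log)$ and $\cup_{k\geq 1}\mathsf{2PAFA}(k)=\mathsf{PASPACE}(log)$ --- the blind and private counterparts of Fact~\ref{fact:2afakp} --- and then substitute $s(n)=\log n$ into the identities $\mathsf{BASPACE}(s(n))=\mathsf{DSPACE}(2^{O(s(n))})$ and $\mathsf{PASPACE}(s(n))=\mathsf{DTIME}(2^{2^{O(s(n))}})$ stated above, which immediately give $\mathsf{BASPACE}(log)=\mathsf{DSPACE}(n^{O(1)})=\mathsf{PSPACE}$ and $\mathsf{PASPACE}(log)=\mathsf{DTIME}(2^{n^{O(1)}})=\mathsf{EXPTIME}$.

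For the two automaton-versus-logspace equalities I would follow the proof of Fact~\ref{fact:2afakp} (originally \cite{Ki88}, resting on the counter-encoding idea of \cite{Ha72}). In one direction, a $2$bafa($k$) (resp. $2$pafa($k$)) is simulated by a logspace blind (resp. private) alternating machine that stores each of the $k$ head positions as an $O(\log n)$-bit binary counter --- $O(k\log n)=O(\log n)$ bits in all --- and updates these counters in lock-step with the automaton's head moves; the private heads become private storage and the public finite control becomes the common state, so the blindness / well-formedness structure is carried over unchanged. In the other direction, a blind (resp. private) alternating Turing machine $M$ running in $c\log n$ space is simulated by a $2$bafa($k$) (resp. $2$pafa($k$)) with $k=O(c)$ heads, packing the $c\log n$ work-tape cells into the positions of $O(c)$ input heads (a head position is an integer in $\{0,\dots,n+1\}$, hence carries about $\log n$ bits) and using a further constant number of heads as scratch registers so the finite control can carry out the increments, comparisons, carries, and bit extractions that emulate a step of $M$; the heads encoding storage the existential player may not consult, together with the scratch heads, are declared private to the universal player. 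The union over $k$ is essential: no single head count captures all of $\mathsf{BASPACE}(log)$, but each fixed $c$ needs only $O(c)$ heads.

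The bare counter-machine/logspace simulation is classical, so the part I expect to require care --- and the only genuinely new content --- is reconciling the incomplete-information structures of the two models: the automaton offers only its $O(1)$-bit finite control as a ``public'' channel, whereas a logspace alternating machine may hold $\Theta(\log n)$ bits of storage visible to the existential player, so the logspace-to-automaton direction needs a preliminary normal-form lemma shrinking the existential-visible storage down to the finite control (e.g.\ by a commit-and-verify reformulation that relegates bulk storage to the private tape and checks consistency on universal branches), after which the \cite{Ha72} encoding applies head-for-head. One must also check, at every step of both simulations, that the well-formedness condition on debate subtrees --- equivalently, the restriction on what a P1/existential strategy may condition on --- is exactly preserved. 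Granting all of this, the two automaton-versus-logspace equalities hold, and composing them with the identities above yields $\cup_{k\geq 1}\mathsf{2BAFA}(k)=\mathsf{PSPACE}$ and $\cup_{k\geq 1}\mathsf{2PAFA}(k)=\mathsf{EXPTIME}$.
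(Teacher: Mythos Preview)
Your approach matches the paper's. The paper does not actually prove this statement---it is labeled a \textbf{Fact}, and the only justification is the sentence preceding it: ``When the input heads are defined as resources private to the universal player, and state information is used for public communications, it can be shown, again with the technique of \cite{Ha72}, that blind and private alternating multihead finite automata \ldots\ are equivalent to the corresponding logspace models.'' Your two-stage plan (Hartmanis head-encoding for the automaton/logspace equivalence, then substitution into Reif's $\mathsf{BASPACE}/\mathsf{PASPACE}$ characterizations from Fact~3) is exactly this.

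Where you go beyond the paper is in flagging the public-storage bandwidth mismatch in the logspace-to-automaton direction: a logspace PATM/BATM may carry $\Theta(\log n)$ bits on the common tape, whereas in the paper's definition of 2pafa($k$)/2bafa($k$) \emph{all} heads are private to the universal player and only the finite state is public. The paper simply asserts the Hartmanis technique works and does not discuss this point. Your proposed normal-form step (pushing bulk storage to the private side with a commit-and-verify check) is a reasonable way to close the gap, and is the kind of detail one would need to supply if writing out a full proof rather than citing the result as folklore.
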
  


\section{Constant randomness}
\label{sec:consrand}
Condon et al. \cite{CFLS95} initiated the study of probabilistic debate checking with their work on complete-information debates. They only considered polynomial-time and logarithmic-randomness bounds on their verifiers. In this section, we characterize the languages that have debates checkable by verifiers that are allowed to use only constant amounts of memory and randomness, not just for complete-, but also for zero- and partial-information debates. Our results here are obtained by adapting a technique discovered by 
Say and Yakary{\i}lmaz \cite{SY12} for simulating nondeterministic logarithmic-space machines by verifiers under these strict resource bounds. Note that one other difference between the model of \cite{CFLS95} and ours is that our verifiers do not have the capability of accessing desired locations of the debate directly, and must read the debate sequentially until they arrive at a decision.


\begin{lemma}
\label{PinCDEBw}
$\mathsf{P} \subseteq \mathsf{CDEB}^*_w(cons, \infty , cons)$.
\end{lemma}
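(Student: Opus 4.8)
The goal is to show that every language in $\mathsf{P}$ has a complete-information debate checkable by a verifier using only $O(1)$ space and $O(1)$ random bits, for every error bound $\varepsilon < \frac12$. The plan is to combine Fact \ref{fact:2afakp} with the simulation technique of \cite{SY12}. By Fact \ref{fact:2afakp}, any $L \in \mathsf{P}$ is recognized by a two-way alternating multihead finite automaton $M$ with $k$ heads for some constant $k$. A $2$afa$(k)$ is, by the correspondence spelled out in Section \ref{subsec:alternation}, exactly a constant-space deterministic verifier $V_0$ reading a complete-information debate, \emph{except} that $V_0$ wants random two-way access to the input via its $k$ heads, whereas our verifier model has a single input head that it must move around itself. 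So the real content is: simulate $k$ two-way input heads using $O(1)$ space and $O(1)$ randomness, at the cost of a small probability of error, while the adversarial players may try to cheat or stall.

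The key idea I would carry out, following \cite{SY12}: replace each of the $k$ input heads of $M$ by a separate \emph{sweep} of the real input head of the verifier $V$, and have the players P1 and P0 supply, as part of the debate, a purported transcript of $M$'s computation — in particular, at each step of $M$ they announce the current position of each head (or the symbol each head scans) and which transition of $M$ is taken. The verifier $V$ uses its single head to check that these announced head contents are consistent: it picks, using its constant supply of random bits, one of the $k$ head-tracks and one ``checkpoint'' to verify, walks its head to the claimed position, and confirms the claimed scanned symbol. Because the input head can only move in unit steps and $V$ has only $O(1)$ space, it cannot store a position index; instead the players must present the head's trajectory incrementally (move left / move right / stay) so that $V$ can shadow it in lockstep, and $V$ spot-checks a randomly chosen head on a randomly chosen maximal run where that head's claimed moves are replayed against the true input tape. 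Existential choices of $M$ become P1's symbols; universal choices become P0's symbols; this is precisely the complete-information debate structure, so the alternation is inherited for free. If a cheating player ever announces an inconsistent head movement or scanned symbol, the random checkpoint catches it with probability at least $1/k$ times the fraction of checked positions; boosting is handled below.

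Two technical points need care. First, \textbf{amplification to arbitrary $\varepsilon$} (the ``$\,{}^*$'' in $\mathsf{CDEB}^*_w$): a single pass catches a cheat with only constant probability $p>0$, so to push the error below any given $\varepsilon$ I would have $V$ perform a constant number $m$ of independent repetitions of the check (each using a fresh constant block of random bits), where $m$ depends only on $\varepsilon$ and $k$, and accept only if the simulated $M$ accepts and no repetition detects an inconsistency; since $m = O(1)$ this keeps space and randomness constant. Because we only need the \emph{weak} guarantee, on $w \notin L$ it suffices that P0 has a strategy making $V$ accept with probability $\le \varepsilon$: P0 simply plays $M$'s optimal refuting (universal) strategy honestly, forcing the simulated $M$ to reject, and if P1 tries to cheat about the transcript P0 need do nothing — the random checks already kill acceptance. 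Second, \textbf{stalling}: a cheating player may emit an infinite sequence of symbols; but $M$'s honest computations halt, and a genuine transcript has bounded length relative to $M$'s time bound. Here I would use the standard trick that $M$'s computation can be taken to halt within a known bound as a function of the configuration count, and have $V$ detect a repeated configuration of the simulation (configurations of $M$ are describable by $O(\log n)$ bits — head positions plus state — which $V$ cannot store, so instead one relies on the players to certify progress, or one argues that an infinitely stalling player is equivalent to a losing one in the debate-tree semantics, since the relevant debate still fails to satisfy the acceptance condition).

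The main obstacle I expect is exactly this interaction between the single-head, constant-space constraint and the adversarial, possibly non-halting players: ensuring that the verifier can validate the announced multihead trajectory against the true input \emph{on the fly} with only $O(1)$ memory, and that no cheating strategy — whether a false transcript or an infinite stall — can be accepted with probability above $\varepsilon$. Getting the well-formedness/debate-subtree bookkeeping to line up with the $2$afa$(k)$ alternation, and verifying that the honest prover strategy on $w \in L$ yields acceptance probability $\ge 1-\varepsilon$ simultaneously across all of P0's replies, are the places where the argument has to be done carefully rather than invoked; the rest is a faithful port of the \cite{SY12} construction into the debate setting.
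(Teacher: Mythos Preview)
Your overall plan is the paper's: take a $2\mathrm{afa}(k)$ for $L$ via Fact~\ref{fact:2afakp}, let P1 supply the existential choices together with the list of $k$ currently scanned symbols, let P0 supply the universal choices, and have the verifier shadow one randomly selected head with its single input head, amplifying by sequential repetition. Two details in your write-up are confused, though, and would not go through as stated.

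First, there is no random ``checkpoint'' and no ``randomly chosen maximal run.'' Selecting a step index (or a run index) out of polynomially many possibilities already costs $\Omega(\log n)$ random bits, destroying the $O(1)$ randomness bound. In the paper the verifier spends its $\lceil\log k\rceil$ coins \emph{only} to choose which of the $k$ heads to follow, and then checks that head's claimed symbol against the true input at \emph{every} simulated step of that iteration; a first lie by P1 is caught with probability at least $2^{-\lceil\log k\rceil}$, with no further randomness. Amplification is by $c$ sequential restarts of the whole simulation (P1 emits a restart symbol after a claimed accept), each with a fresh head choice, so the total coin budget is $c\lceil\log k\rceil=O(1)$. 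Also, it is P1 alone who announces the scanned symbols (not ``the players''): this is what makes acceptance probability equal to $1$ for $w\in L$ under honest P1, and forces P1 to lie for $w\notin L$.

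Second, your stalling discussion overreaches. The verifier cannot store a configuration of $M$ (head positions need $\Theta(\log n)$ bits), so ``detect a repeated configuration'' and ``have the players certify progress'' are not available in constant space. None of this is needed. Under the weak definition you only require $P(\text{accept})\le\varepsilon$ for $w\notin L$; a run that never halts has acceptance probability $0$. With P0 playing $M$'s optimal universal strategy, an honest simulation reaches a reject state, so P1 must lie; either the lie is on the tracked head (reject) or the simulation may wander or loop forever (no accept). The only way $V$ accepts is if P1's lie escapes detection in all $c$ rounds, which happens with probability at most $(1-2^{-\lceil\log k\rceil})^c<\varepsilon$. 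That non-halting runs remain possible is precisely why this lemma yields only the weak class $\mathsf{CDEB}^*_w$; the strong version is obtained separately in Lemma~\ref{PinCDEB}.
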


\begin{proof}
Let $L$ be any language in $\mathsf{P}$, and let $M$ be the 2afa($k$) recognizing $L$  by Fact \ref{fact:2afakp}. We will construct a constant-space verifier which tosses only a constant number of coins while checking complete-information debates about membership in $L$ with  a desired positive error probability $\varepsilon$ according to the weak definition.

Without loss of generality, we make certain assumptions about $M$: All states of $M$ are either existential or universal. The starting and the halting states of $M$ are existential. Each existential and universal state leads exactly to two different branches, and computation alternates between existential and universal states.

In the debate, P1 and P0 are supposed to be talking about the step-by-step execution of $M$ on the input string $w$ in a sequence of exchanges, each of which treats a pair of $M$-steps. Each exchange has the following structure:
\begin{itemize}
\item P1 announces its claim about which existential choice to be made by $M$ at the present step guarantees reaching an accept state eventually. P1 also gives a list of $k$ tape symbols of $M$,  claiming that the $i$th head of $M$ will be scanning the $i$th symbol in this list after taking the step just announced by P1, for all $i$, where $1 \leq i \leq k$.
\item P0 announces the universal choice of $M$ for the next step that it claims will lead to a rejection of $w$.
\item P1 announces the list of $k$ tape symbols which it claims that will be scanned by $M$'s heads after the execution of the universal step according to the choice just specified by P0.
\end{itemize}
(Our definition in Section \ref{sec:preliminaries} stipulates that P1 and P0 should alternate after emitting each symbol. The protocol here can be made to fit that requirement, since the  ``packages" sent by P1 are of fixed length, and P0 can emit dummy symbols while listening to them. The verifier can measure the length of the packages using its constant-sized memory, and reject the input if it sees P1 giving a package of the wrong length.)

If P1 claims that the set of moves described up to that point in the debate will have caused $M$ to halt with acceptance, it emits the special symbol $\circlearrowleft$ indicating this claim, and restarts the procedure by giving the first existential choice from the initial configuration of $M$ on $w$.

We now describe the verifier  $V$. $V$ will read the debate to simulate the execution of $M$ on the particular computation path indicated by the alternating messages of P1 and P0. It starts by initializing a counter to 1, and randomly selecting one of the $k$ heads of $M$, using $r=\left\lceil \log k \right\rceil$ random bits. Each head has a probability of at least $2^{-r}$ to be selected. $V$ uses its single input head to track the position of the selected head during its simulation of $M$. $V$ needs to know what the other heads are scanning in order  to compute $M$'s next state at any step, and it depends on P1 for this information. After each simulated step, $V$ checks if the symbol scanned by its input head is consistent with what P1 claims about the corresponding head of $M$. If it sees that P1 is lying in this regard, $V$ rejects the input. Otherwise, it continues by reading the next exchange. If $V$ sees $M$ reaching a reject state during the simulation, it rejects. If P1 emits a $\circlearrowleft$, $V$ checks if its simulation of $M$ has indeed reached an accept state. If $V$ sees that $M$ has not accepted despite P1 announcing that it has, it rejects. If $M$ is seen to have accepted, $V$ increments the counter. If the counter has exceeded $c$, whose value will be discussed below, $V$ accepts. Otherwise, $V$ moves its head to its original position, randomly picks a head of $M$ to track, and starts processing the restarted debate. 

If $w \in L$, P1 can always find a way of responding to P0 that will lead $M$ to acceptance, while giving correct information about the symbols scanned by the heads, so $V$ will accept with probability 1. If $w \notin L$, P1 must lie about at least one head at some point to prevent $V$ from rejecting. This lie will be caught, and $V$ will therefore reject, with probability at least $2^{-r}$, in every simulated computation of $M$. The probability that $V$ will fail to pick the head that P1 is lying about in all $c$ iterations of the loop is $(1-2^{-r})^c$, which can be tuned to be below $\varepsilon$ by choosing a sufficiently large $c$.

An evil P1 can cause $V$ to spend infinite time without ever restarting the computation of $M$ with probability at most $(1-2^{-r})$. We conclude that $V$ is a weak debate checker for $L$ with the desired properties.
\end{proof}
%



\begin{lemma}
\label{PinCDEB}
$\mathsf{P} \subseteq \mathsf{CDEB}(cons, \infty , cons)$.
\end{lemma}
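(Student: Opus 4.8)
The plan is to reuse the verifier and exchange protocol built in the proof of Lemma~\ref{PinCDEBw} and to close the one gap that keeps that construction from meeting the \emph{strong} definition of debate checking. For $w\in L$ the honest P1 already makes $V$ accept with probability $1$; the sole defect is that for $w\notin L$ a cheating P1 can keep the exchange going forever --- never emitting $\circlearrowleft$ and never letting the simulated $M$ reach a reject state --- so that $V$ merely fails to accept rather than reaching $q_r$ with probability at least $1-\varepsilon$. Hence all the work is on the soundness side, and since Lemma~\ref{PinCDEB} carries no ${}^{*}$ it suffices to obtain this for one fixed error $\varepsilon<\tfrac12$.

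First I would normalize the 2afa($k$) $M$ of Fact~\ref{fact:2afakp} so that it halts on every computation path within a fixed polynomial $p(n)$ number of steps: attach a clock counting up to the number of reachable configurations and force a rejecting halt on overflow; since this clock needs only a constant number of extra input heads, $M$ is still a 2afa($k$) for a (possibly larger) constant $k$. The consequence is that if P1's messages describe an exchange in which $M$ never reaches a halting state, P1 must at some step misreport the symbol scanned by some head of $M$. Second, I would give the refuter the power to \emph{force a restart}: add a special symbol to $\Gamma_0$ on reading which $V$ aborts its current simulation, moves its input head back, and --- for at most a constant number $c'$ of such events --- picks a fresh head of $M$ to track using $\lceil\log k\rceil$ new random bits, then resumes as though P1 were beginning a fresh description of $M$'s computation; P1, seeing this public symbol, likewise restarts, and a P1 that does not is caught in an inconsistency and rejected. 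When $w\notin L$, the honest P0 plays a winning universal strategy and forces a restart the instant P1's reported symbols deviate from the true play it can itself compute; since a stalling P1 must eventually deviate on every attempt, and $V$ tracks a freshly chosen random head on each of the first $c'$ restarted attempts, $V$ catches a deviation --- and halts in $q_r$ --- on some such attempt with probability at least $1-(1-2^{-r})^{c'}$.

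Assembling the pieces: for $w\notin L$, $V$ accepts only if it completes $c$ $\circlearrowleft$-delimited attempts, each ending in a (necessarily bogus, hence lying) claim of acceptance, without ever catching a lie, which has probability at most $(1-2^{-r})^{c}$ exactly as in Lemma~\ref{PinCDEBw}; and $V$ runs forever only if P1 eventually stalls on every remaining attempt while escaping detection on all of the $c'$ fresh-head restarts, which has probability at most $(1-2^{-r})^{c'+1}$; taking the constants $c$ and $c'$ large enough makes the sum at most $\varepsilon$, so $V$ rejects with probability at least $1-\varepsilon$. Completeness is untouched: for $w\in L$, P1 plays $M$'s accepting strategy truthfully, any restarts a dishonest P0 forces only delay matters, and once P0's budget is spent $V$ accumulates $c$ genuine accepting attempts and accepts. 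I expect the main obstacle to be routine-but-delicate bookkeeping: verifying that the forced-restart symbol fits the alternating, well-formed-debate formalism of Section~\ref{sec:preliminaries}, that a dishonest P0 on an input in $L$ cannot exploit it to cause a spurious rejection or a spurious infinite loop, and that all of $V$'s random choices together --- the $c+c'$ head picks of $\lceil\log k\rceil$ bits each --- still fit in $O(1)$ bits, which they do since $c$, $c'$, and $k$ depend only on $L$ and $\varepsilon$.
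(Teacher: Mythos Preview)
Your proposal is correct but takes a substantially different route from the paper. The paper's argument is a two-line probabilistic trick: take the verifier $V$ from Lemma~\ref{PinCDEBw} with the iteration count set to $c=1$, so that $V$ accepts members with probability $1$ and rejects nonmembers with probability at least $2^{-r}$; then build $V'$ which, using $r+1$ additional coins, rejects outright with probability $\frac{2^r-1}{2^{r+1}}$ and otherwise hands control to $V$. This biasing alone pushes both the acceptance probability on members and the rejection probability on nonmembers strictly above $\tfrac12$, giving a strong checker with error bound $\frac{2^{2r}-1}{2^{2r+1}}$. No modification of the protocol, no halting normalization of $M$, and no cooperation from P0 are required.

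Your construction instead re-engineers the protocol so that $V$ genuinely reaches $q_r$ with high probability on nonmembers: you force $M$ to halt on all paths, and you equip P0 with a bounded budget of public restart tokens so that an honest refuter can repeatedly expose a stalling P1 to freshly randomized head-tracking. This works, and it has the pleasant feature that completeness remains perfect (the paper's $V'$ rejects members with probability close to $\tfrac12$). The cost is the delicate bookkeeping you anticipate: the interaction between the $\circlearrowleft$-driven and P0-driven restarts, the timing guarantee that P0's restart arrives before any bogus $\circlearrowleft$ can increment the counter, what $V$ does with restart symbols once P0's budget is exhausted, and the independence of the fresh head picks across attempts. None of this is wrong, but the paper sidesteps all of it with a single biased coin at the outset.
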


\begin{proof}
Consider the verifier $V$ from the proof of Lemma 
\ref{PinCDEBw} with the number of iterations $c$ set to 1. 
Recall that this $V$ uses $r=\left\lceil \log k \right\rceil$ random bits to simulate one run of the 2afa($k$) that recognizes a language $L$ in $\mathsf{P}$. $V$ accepts input strings in the language with probability 1, and rejects strings not in the language with probability $2^{-r}$.

To increase the rejection probability for nonmembers so that the resulting machine fits the definition in Section \ref{sec:preliminaries}, we construct a new verifier $V'$, which starts by using an additional $r+1$ random bits to  reject the input directly with probability $\frac{2^r -1}{2^{r+1}}$, and transfers control to $V$ with the  remaining probability. $V'$ can be seen to accept inputs in $L$ with probability at least $1-\left( \frac{2^r -1}{2^{r+1}} \right)=\frac{2^r +1}{2^{r+1}}$, and rejects the inputs not in $L$ with probability $2^{-r} \left( \frac{2^r +1}{2^{r+1}} \right) + \frac{2^r -1}{2^{r+1}} = \frac{2^{2r} +1}{2^{2r+1}} $. Thus, $V'$ constitutes a debate checker for $L$ according to the strong definition with error bound $\frac{2^{2r} -1}{2^{2r+1}}$.     
\end{proof}




We now derive some time bounds that will be useful in our characterization theorems.

\begin{lemma}\label{CDEBwinCDEB}
\begin{tabular}{@{}l@{}}
${\mathsf{CDEB}}_w(log, \infty , cons) \subseteq {\mathsf{CDEB}}(log, poly, cons)$, \\
${\mathsf{ZDEB}}_w(log, \infty , cons) \subseteq {\mathsf{ZDEB}}(log, exp, cons)$, \\
${\mathsf{PDEB}}_w(log, \infty , cons) \subseteq {\mathsf{PDEB}}(log, exp, cons)$.
\end{tabular}
\end{lemma}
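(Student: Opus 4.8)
The plan is to upgrade a weak checker to a strong one while simultaneously taming its running time, exploiting that $O(1)$ random bits means only $K=2^{O(1)}=O(1)$ possible random strings. Fix a weak checker $V$ of the relevant type with some error $\varepsilon_0<\tfrac12$, and regard it as $K$ deterministic machines $V_{\rho_1},\dots,V_{\rho_K}$ that are all fed the same debate. Since $V$ is logspace, each $V_{\rho_j}$ has only $\mathrm{poly}(n)$ configurations, so between two consecutive consultations of a reading cell a $V_{\rho_j}$ either halts, reads its next debate symbol within $\mathrm{poly}(n)$ steps, or exceeds that bound without reading or halting — in which case it is stuck in an ``inter-read loop'' and will run forever, a condition that $V$'s simulator can detect in logarithmic space with a $\mathrm{poly}(n)$ step counter.

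The crux is the member side: from the good subtree guaranteed by condition (1) of the weak definition I need a \emph{fast} subtree on which, for every debate, at least $(1-\varepsilon_0)K$ of the $\rho_j$ drive $V$ into $q_a$ within the claimed time bound. For complete-information debates, consider the finite reachability game whose positions are the $K$-tuples of reading/halted configurations of $V$ (there are $\mathrm{poly}(n)^K=\mathrm{poly}(n)$ of them, as $K=O(1)$), where P1 and P0 alternately choose debate symbols that advance all $K$ coordinates in parallel and P1 aims at a position in which $\ge(1-\varepsilon_0)K$ coordinates are halted in $q_a$. A K\"onig's-lemma argument shows the good subtree witnesses that P1 wins (the ``$\ge(1-\varepsilon_0)K$ coordinates accepted'' predicate is downward closed in time, hence stable once reached); therefore P1 has a positional winning strategy reaching the target within $\mathrm{poly}(n)$ moves, and this strategy \emph{is} a well-formed subtree (there are no private symbols) on which every debate makes $\ge(1-\varepsilon_0)K$ of the $V_{\rho_j}$ accept within $\mathrm{poly}(n)$ steps. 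For zero- and partial-information debates P1 cannot react to P0's private moves, so I replace the configuration-tuple graph by its information-set version: positions are \emph{sets} of $K$-tuples; P1's moves act deterministically on the current set, P0's public moves refine it and P0's private moves enlarge it, and P1 aims at a set all of whose tuples have $\ge(1-\varepsilon_0)K$ coordinates in $q_a$. This graph has $2^{\mathrm{poly}(n)}=\exp(n)$ positions; a positional winning strategy for P1 translates (just as in Reif's blind and private games) into a well-formed zero- resp.\ partial-information subtree, and ``fast'' now means ``within $\exp(n)$ steps.''

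The new verifier $V'$ tosses its $O(1)$ coins to pick $\rho$, simulates $V_\rho$, halts as $V_\rho$ does, and rejects whenever it detects an inter-read loop; in the complete-information case it additionally carries a $\mathrm{poly}(n)$ step counter and rejects on timeout. For $w\in L$, P1 plays the fast subtree: for every debate at least $(1-\varepsilon_0)K$ of the random strings lead $V_\rho$, hence $V'$, to accept within the time bound, so $P(a)^w_{(V',\pi)}\ge 1-\varepsilon_0$. For $w\notin L$ and any subtree, P0 plays the refuting debate guaranteed by condition (2$'$): $V'$ can only accept when $V_\rho$ does within the bound, so $P(a)^w_{(V',\pi)}\le P(a)^w_{(V,\pi)}\le\varepsilon_0$, and since in the complete-information case $V'$ always halts (clock) we get $P(r)^w_{(V',\pi)}\ge 1-\varepsilon_0>\tfrac12$, turning the weak guarantee into a strong one.

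The step I expect to be the real obstacle is exactly the last point for the zero- and partial-information cases: enforcing the exponential time bound with only logarithmic work-tape space. A naive $2^{n^c}$-step counter needs polynomial space, so ``timeout and reject'' must be implemented indirectly — for instance by having $V'$ also verify a self-timing component that P1 is required to interleave into the debate (a strictly increasing, at-most-$n^c$-bit step count, whose local validity and bounded length are logspace-checkable, so that no opponent can stall past $\exp(n)$), or by arguing directly that against a non-member the refuter can always choose a refuting debate on which every non-accepting coordinate either reaches $q_r$ or enters a (logspace-detectable) inter-read loop within $\exp(n)$ steps, so that $V'$ still rejects with probability $\ge 1-\varepsilon_0$. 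Getting this bookkeeping right, rather than the game-theoretic pumping, is where the work lies.
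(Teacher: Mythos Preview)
Your game-theoretic argument---derandomizing $V$ into a constant-size ensemble of deterministic logspace machines, bounding debate length by a pumping argument on ensemble states (resp.\ sets of ensemble states for the private/blind cases), and clocking the simulation---is exactly the paper's approach. Your König-lemma/positional-strategy formulation is a bit more explicit than the paper's prose, but the content is the same.

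The gap you flag is real, and the paper resolves it differently from either of your suggestions. Rather than asking P1 to interleave a step counter or arguing that P0's refuting debate forces visible halting, the paper invokes Theorem~1 of Peterson--Reif \cite{PR79}: a blind alternating Turing machine with $s(n)$ space can count up to $2^{2^{s(n)}}$. With $s(n)=O(\log n)$ this is $2^{\mathrm{poly}(n)}$, precisely the bound you need. The point is that a verifier in a zero- or partial-information debate already \emph{has} the ingredients of a BATM at its disposal---P1 supplies existential branching, P0's private channel supplies blind universal branching---so the verifier can interleave the Peterson--Reif counting protocol into the debate and reject when it overflows. Neither player can sabotage the counter in a way that helps them: the protocol is designed so that the universal (P0) side challenges and catches any attempt by the existential (P1) side to undercount, which is exactly the direction of cheating you need to block for $w\notin L$. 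Your ``self-timing component'' idea is in the right spirit but would require comparing consecutive $n^c$-bit strings in logspace; the Peterson--Reif mechanism sidesteps this by encoding the counter implicitly in the alternation structure rather than as explicit data the verifier must store or compare.
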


\begin{proof}
We start by showing that a complete-information debate checkable by a verifier using logarithmic space and a constant amount of random bits does not need to be longer than some polynomially bounded number of symbols. 

Let $V$ be the logspace verifier for a language $L \in {\mathsf{CDEB}}_w(log, \infty , cons)$ that uses at most $r$ random bits for some constant $r$. $V$ can be derandomized as a collection of $2^r$ deterministic logspace verifiers S=$\{V_1, V_2,.. , V_{2^r}\} $ for each different assignment to the random sequence of length $r$, such that, for inputs in $L$, the common debate that the verifiers in S are reading eventually makes more than half of them accept. Each $V_i$ can be in one of polynomially many different possible configurations. Let E be the set of all possible combined states of the ``ensemble" consisting of these verifiers. The cardinality $C$ of E  is itself polynomially bounded.


For any string $w \in L$, we know that there is a debate subtree, i.e. a ``best strategy" for P1, where P1 ``wins" in every branch by steering $V$ toward an ensemble with a majority of accepting configurations. If a debate corresponding to a particular branch of that tree has been going on for more than $C$ turns without reaching such an ``accepting" ensemble, this means that P1 does not have a sequence of clever responses to prevent P0 from causing a loop of ensemble states, and so this debate will never end up with an accepting ensemble. This contradiction leads us to conclude that no complete-information debate that is checkable by a constant-coin logspace verifier needs to have superpolynomial length. We construct a new verifier that simulates $V$, while using its logarithmic memory to clock this simulation, cutting off and rejecting when either $V$ is seen to enter an infinite loop without reading debate symbols, or when the debate goes on for too long.

For partial-information debates, we modify the argument above to take the increased ignorance of P1 about the state of $V$ into account. Since P1 does not know what P0 has been saying to $V$, it does not know precisely what ensemble state $V$ is in at any point during the debate. From P1's point of view, $V$ can be in this or that ensemble state, having received this or that message from P0, so P1's strategy has to be based on viewing $V$ as being in a \textit{set} of possible ensemble states that are consistent with what little P1 knows about P0's messages up to that point \cite{PR79}, and finding an argument that would lead all those ensembles to acceptance. 
We now see that any partial-information debate longer than the cardinality of the power set of E must involve a repetition of a set of ensembles, which means a failure for P1. By the same reasoning as above, no partial-information debate that is checkable by a constant-coin logspace verifier needs to have superexponential length.


Blind alternating Turing machines can count up to $2^{2^{s(n)}}$ using $s(n)$ space for all $s(n) \geq \log n$ (see Theorem 1 in \cite{PR79}). We can integrate this counting mechanism to any logspace constant-coin verifier to reject the input if the partial-information debate that it has been reading has exceeded the exponential time bound derived above.
\end{proof}

\begin{lemma}
\label{CDEBwinP}
$\mathsf{CDEB}(log, poly , cons) \subseteq \mathsf{P}$.
\end{lemma}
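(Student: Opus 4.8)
The plan is to express the debate-checking game as an alternating logarithmic-space computation, which is in $\mathsf{P}$ by Fact~\ref{fact:ASPACE}. Let $V$ be the given verifier: logspace, running in polynomial time $p(n)$, using at most $r=O(1)$ random bits, with error $\varepsilon<\frac{1}{2}$, reading a complete-information debate. Attaching a polynomial clock (costing only $O(\log n)$ extra space) lets me assume $V$ halts on every computation and every debate, so that $P(r)^w_{(V,\pi)}=1-P(a)^w_{(V,\pi)}$ for every debate $\pi$; this is harmless, since on a non-member it only raises the rejection probability and on a member the ``good'' debates already make $V$ halt in time. Padding with ignored coin tosses, I assume $V$ flips exactly $r$ coins on every path, so hard-wiring the $2^r$ outcomes splits $V$ into deterministic logspace polynomial-time verifiers $V_1,\dots,V_{2^r}$ reading the same debate, with $P(a)^w_{(V,\pi)}=2^{-r}\,|\{\,i : V_i\text{ accepts }\pi\,\}|$. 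The crucial structural remark is that a complete-information debate is a single sequence $\sigma_1\sigma_2\sigma_3\cdots$ whose odd-indexed symbols are chosen by P1 and whose even-indexed symbols by P0 (well-formedness is automatic when $\Delta=\emptyset$), and that each $V_i$, alternating between its two reading cells, simply consumes this sequence from left to right.

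Next I would run the $2^r$ verifiers in lockstep as an \emph{ensemble}, maintaining the tuple $(\gamma_1,\dots,\gamma_{2^r})$ of their current configurations plus one parity bit. The invariant is that whenever a new debate symbol has just been revealed I run every not-yet-halted $V_i$ until it halts or blocks waiting for the next, still-unrevealed symbol; since all the $V_i$ read the \emph{same} sequence in order, at that moment every surviving $V_i$ has consumed exactly the revealed prefix, so each $\gamma_i$ is either a halting configuration or a configuration about to read the next symbol, and the parity bit says whether that symbol is P1's or P0's. Since each $V_i$ is logspace, the whole ensemble fits in $O(2^r\log n)=O(\log n)$ space. So I construct an ATM that stores the ensemble, makes an existential guess for each P1 symbol and a universal guess for each P0 symbol (after each guess updating the ensemble by running all machines maximally), clocks itself for $p(n)$ rounds --- by which time every $V_i$ has halted regardless of the opponent --- and then accepts iff more than $2^{r-1}$ of the $\gamma_i$ are accepting halting configurations.

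It then remains to check that this ATM accepts exactly $L$, hence that $L\in\mathsf{ASPACE}(\log n)=\mathsf{P}$ by Fact~\ref{fact:ASPACE}. If $w\in L$, a winning P1 debate subtree, read as the ATM's existential strategy, forces every play: after $\le p(n)$ symbols all the $V_i$ have halted, and $P(a)^w_{(V,\pi)}\ge 1-\varepsilon>\frac{1}{2}$ forces more than $2^{r-1}$ of them to accept, so the ATM accepts. Conversely, if $w\notin L$ but the ATM accepted, the witnessing existential strategy would be a P1 debate subtree on which every debate $\pi$ makes more than $2^{r-1}$ of the $V_i$ accept, i.e.\ $P(a)^w_{(V,\pi)}>\frac{1}{2}$, so $P(r)^w_{(V,\pi)}<\frac{1}{2}<1-\varepsilon$ for every $\pi$ of that subtree --- contradicting the requirement that on every well-formed subtree some debate rejects $w$ with probability at least $1-\varepsilon$.

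The step I expect to need the most care is the ensemble bookkeeping: one must verify that ``run every surviving machine maximally after each revealed symbol'' really does keep all of them synchronised on the revealed prefix --- this uses essentially that a complete-information debate is one left-to-right sequence and that reads alternate between the two cells --- and that the single winning condition ``more than $2^{r-1}$ accept'' simultaneously matches the member clause ($P(a)\ge 1-\varepsilon>\frac{1}{2}$) and, through $P(r)=1-P(a)$, the non-member clause ($P(r)\ge 1-\varepsilon>\frac{1}{2}$) of the strong definition.
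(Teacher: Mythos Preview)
Your proposal is correct and follows essentially the same approach as the paper: derandomize the verifier into an ensemble of $2^r$ deterministic logspace machines, simulate them in lockstep inside an alternating logspace machine that existentially/universally guesses the P1/P0 symbols, and accept when a majority of the ensemble accepts, invoking $\mathsf{ASPACE}(\log n)=\mathsf{P}$. Your write-up is in fact a bit more careful than the paper's, spelling out the polynomial clock (so that $P(r)=1-P(a)$) and the two directions of the correctness argument, while the paper simply asserts the majority-acceptance criterion.
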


\begin{proof}
Let $L$ be a language in $\mathsf{CDEB}(log, poly , cons)$. Let $V$ be the polynomial-time, logspace verifier that checks complete-information debates about $L$ with bounded error using $r$ random bits.
Without loss of generality, we assume that each configuration of $V$ that reads C1 is immediately followed by a configuration that reads C0. We construct an alternating logspace Turing machine $M$ that recognizes $L$. 


As in the proof of Lemma \ref{CDEBwinCDEB}, we think of $V$ as a collection of $2^r$ deterministic logspace verifiers S=$\{V_1, V_2,.. , V_{2^r}\}$. $M$ simulates the elements of S in a time-sharing fashion. It first advances each $V_i$  until they reach their first reading configuration, or halt. When all the $V_i$ that have not halted yet are ready to consume a prover symbol, $M$ branches existentially to produce a symbol  from $\Gamma_1$, and feeds this symbol to all those $V_i$s. $M$ then branches universally to produce a symbol from $\Gamma_0$ to feed to the verifiers waiting to read C0. $M$ keeps simulating each $V_i$ in this fashion by feeding them alternately produced symbols as the next pair of messages of the debate. Since all the $V_i$s use logarithmic space, the simulation also takes logarithmic space. $M$ accepts if more than half of the deterministic verifiers in the set S has accepted. Otherwise, $M$ rejects the input string.   
\end{proof}

We have proven
\begin{theorem}
\label{PeqCDEB}
$\mathsf{CDEB}(cons, \infty , cons) = \mathsf{CDEB}(log, poly , cons) = \mathsf{P}$.
\end{theorem}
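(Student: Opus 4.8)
The plan is to obtain both equalities essentially for free by arranging the lemmas of this section into a cycle of inclusions that pins every class appearing in the statement between $\mathsf{P}$ and itself.

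First I would record the lower bound: Lemma~\ref{PinCDEB} already establishes $\mathsf{P} \subseteq \mathsf{CDEB}(cons, \infty , cons)$. Because loosening any one of the space, time, or randomness bounds can only enlarge a class of debatable languages, and because every strong debate checker is in particular a weak one (as noted right after the definitions), this yields the chain
\[
\mathsf{P} \subseteq \mathsf{CDEB}(cons, \infty , cons) \subseteq \mathsf{CDEB}(log, \infty , cons) \subseteq \mathsf{CDEB}_w(log, \infty , cons) \subseteq \mathsf{CDEB}(log, poly , cons),
\]
the final inclusion being exactly the first line of Lemma~\ref{CDEBwinCDEB}. Next I would close the cycle with Lemma~\ref{CDEBwinP}, which gives $\mathsf{CDEB}(log, poly , cons) \subseteq \mathsf{P}$. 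Together with the chain above this forces $\mathsf{CDEB}(log, poly , cons) = \mathsf{P}$, and then the intermediate class $\mathsf{CDEB}(cons, \infty , cons)$, being squeezed between $\mathsf{P}$ and $\mathsf{CDEB}(log, poly , cons)$, also equals $\mathsf{P}$. That is the whole argument.

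So the theorem itself is just bookkeeping; the genuinely delicate points — already discharged in the lemmas it rests on — are, on the lower-bound side, the simulation of the machine $M$ of Fact~\ref{fact:2afakp} by a constant-space, constant-randomness verifier that amplifies its soundness by repeatedly restarting the entire debate (Lemma~\ref{PinCDEBw}), and, on the upper-bound side, the polynomial-length bound hidden inside Lemma~\ref{CDEBwinCDEB}. The crux of the latter is to derandomize a constant-randomness logspace verifier $V$ into the ensemble of the $2^r$ deterministic logspace machines obtained by fixing its coin sequence, run them in lockstep on one common debate, note that the ensemble has only polynomially many joint configurations, and then argue by pigeonhole: if P1's winning strategy on a branch has not yet steered the ensemble into a majority-accepting configuration after that many turns, P0 can force the sequence of ensemble configurations to loop, so no accepting configuration is ever reached, contradicting that the branch was winning. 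Clocking the simulation with the logarithmic work tape converts the unbounded-time verifier into a polynomial-time one recognizing the same language, which is precisely the hypothesis of Lemma~\ref{CDEBwinP}. (The same reasoning applied to \emph{sets} of consistent ensemble configurations, rather than single ones, gives the exponential-length bounds used for zero- and partial-information debates later on.)

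I do not expect any further obstacle in the theorem proper: once the four lemmas are in hand, the only thing to check is that the monotonicity and strong-to-weak inclusions used in the chain are the ones explicitly sanctioned by the definitions in Section~\ref{sec:preliminaries}, which they are.
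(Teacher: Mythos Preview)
Your proposal is correct and mirrors the paper's own argument: the theorem is stated immediately after Lemmas~\ref{PinCDEB}, \ref{CDEBwinCDEB}, and \ref{CDEBwinP} with the phrase ``We have proven,'' and the intended proof is precisely the chain of inclusions $\mathsf{P} \subseteq \mathsf{CDEB}(cons,\infty,cons) \subseteq \mathsf{CDEB}_w(log,\infty,cons) \subseteq \mathsf{CDEB}(log,poly,cons) \subseteq \mathsf{P}$ that you spell out. Your additional commentary on the content of the supporting lemmas is accurate but goes beyond what the paper records for the theorem itself.
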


\begin{lemma}
\label{PSPACEinZDEB}
$ \mathsf{PSPACE} \subseteq \mathsf{ZDEB}(cons, \infty , cons)$.
\end{lemma}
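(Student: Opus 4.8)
The plan is to mirror the structure of the proof of Lemma \ref{PinCDEBw}, but starting from a blind alternating multihead finite automaton instead of an ordinary alternating one. By the facts collected in Section \ref{subsec:alternation}, $\mathsf{PSPACE} = \mathsf{BASPACE}(log) = \cup_{k \geq 1}\mathsf{2BAFA}(k)$, so given $L \in \mathsf{PSPACE}$ we may fix a 2bafa($k$) $M$ recognizing $L$, and we make the same normalizing assumptions as before: every state is existential or universal, computation alternates, each state branches into exactly two successors, and the start and halting states are existential. The crucial feature we exploit is that in a blind ATM the universal player controls the $k$ input heads as \emph{private} resources, and the existential player never gets to see their positions; the common memory is just the (constant-size) state. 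So in the debate, P1's role is to describe the sequence of existential choices along a putative accepting computation, and P0's role is to describe the universal choices, but since the input heads are private to P0, it is P0 who must also announce, at each step, which symbols the $k$ heads are scanning — and P0 must do this using private symbols from $\Delta$, so that P1's strategy genuinely cannot depend on head positions (this is exactly what the zero-information restriction $\Gamma_0 = \emptyset$ buys us, and it matches the blindness of $M$).

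The verifier $V$ works as follows. It again randomly picks one of the $k$ heads of $M$ to track with its single input head, using $r = \lceil \log k \rceil$ random bits, and initializes a loop counter. It then reads the debate exchange by exchange: in each exchange P1 emits an existential choice (padded to fixed length, P0 emitting dummies meanwhile, exactly as in Lemma \ref{PinCDEBw}), and P0 emits the universal choice for the next step together with the claimed list of $k$ scanned symbols, all written as private $\Delta$-symbols. $V$ simulates the two $M$-steps, checking after each that the symbol under its own input head agrees with P0's claim about the tracked head; but here the direction of the catch is reversed relative to Lemma \ref{PinCDEBw}. Since we are proving a containment in $\mathsf{ZDEB}$ (the strong definition), we need: if $w \in L$, P1 has a strategy making \emph{every} debate lead $V$ to accept with high probability, and if $w \notin L$, on every well-formed debate subtree \emph{some} debate makes $V$ reject with high probability. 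When $w \in L$, P1 plays the winning existential strategy of $M$; then no matter what universal choices P0 announces, the honest continuation of $M$'s computation accepts, and if P0 lies about a head symbol $V$ catches it with probability $\geq 2^{-r}$ and rejects — but a rejection is what we want against a cheating P0, wait: we want acceptance when $w\in L$. So the honest behavior for P0 when $w \in L$ is forced in the sense that \emph{any} P0 deviation is punished; the point is that when $w\in L$ there \emph{is} no P0 deviation that helps, because even with truthful head reports every branch of $M$'s computation under P1's winning strategy accepts, so $V$ accepts with probability $1$ along every such debate. When $w \notin L$, P0 has a winning universal strategy in $M$; P0 plays it and reports head symbols truthfully, driving $V$'s simulation to a reject state of $M$, so $V$ rejects; and if instead P1 tries to deviate by misrepresenting the common state or the existential choices, $V$ detects the inconsistency directly (the common memory is constant-size and fully visible) and rejects. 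Amplification of the $2^{-r}$ gap and handling of an "evil" player who stalls forever are done by the same counter-and-restart device as in Lemma \ref{PinCDEBw}, with $\circlearrowleft$ used by P1 to signal a claimed acceptance and restart from $M$'s initial configuration.

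The main obstacle I expect is getting the well-formedness bookkeeping exactly right: we must argue that because all of P0's head-symbol reports are in $\Delta$, the P0 sequence \emph{seen} by any P1 node records only $\flat$'s in those positions, so the well-formedness condition forces P1's messages to be functions solely of the existential-choice history — which is precisely the information a strategy in a blind ATM may use — and conversely any winning P1 strategy of $M$ lifts to a well-formed debate subtree here. A secondary point to check carefully is that P0's private head reports do not leak through the protocol's fixed-length padding (P1's packages have fixed length independent of anything P0 said, so no leakage occurs), and that $V$ can enforce the protocol format using only $O(1)$ space. Once these structural points are pinned down, the probabilistic analysis is identical to that of Lemma \ref{PinCDEBw}.
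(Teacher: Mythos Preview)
Your overall setup is right and matches the paper: start from a 2bafa($k$) $M$ for $L\in\mathsf{PSPACE}$, have P0 (not P1) supply the $k$ scanned symbols via private $\Delta$-messages, and let $V$ track one randomly chosen head. You also correctly note that the ``direction of the catch'' is reversed: here $V$ should \emph{accept} when it detects a discrepancy between its own input symbol and P0's claim.

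The gap is in your probability analysis for $w\in L$. You write that ``$V$ accepts with probability~$1$ along every such debate,'' but this does not follow. When $w\in L$ and P0 lies about a head symbol, $V$ catches the lie only with probability $\ge 2^{-r}$; with the remaining probability $V$ is tracking a different head, swallows the lie, and its simulated state of $M$ diverges from any genuine computation. From that point on the simulation may well reach a reject state (or never halt), so the acceptance probability for members of $L$ is only $\ge 2^{-r}$, not $1$. Your own ``wait:'' indicates you sensed this, but the paragraph that follows does not repair it.

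This in turn breaks your amplification plan. The counter-and-restart device of Lemma~\ref{PinCDEBw} requires P1 to emit $\circlearrowleft$ precisely when $V$'s simulation has reached an accept state, and $V$ to check this. But in the zero-information setting P1 cannot know $V$'s simulated state: that state depends on P0's private (and possibly false) head reports, which P1 never sees. So P1 has no way to time the $\circlearrowleft$ correctly, and an honest P1 on input $w\in L$ can be made to restart at a moment when $V$'s corrupted simulation is not in an accept state, causing $V$ to reject.

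The paper's fix is exactly the trick of Lemma~\ref{PinCDEB}, with accept/reject roles swapped: run a single round of your $V$ (no restarts), obtaining acceptance probability $\ge 2^{-r}$ for $w\in L$ and rejection probability $1$ for $w\notin L$; then wrap it in a verifier $V'$ that first \emph{accepts} outright with probability $\frac{2^r-1}{2^{r+1}}$ using $r+1$ extra coins, and otherwise defers to $V$. This yields acceptance probability at least $\frac{2^{2r}+1}{2^{2r+1}}$ on members and rejection probability $\frac{2^r+1}{2^{r+1}}$ on non-members, meeting the strong definition with error bound $\frac{2^{2r}-1}{2^{2r+1}}$.
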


\begin{proof}

Let $L$ be any language in $\mathsf{PSPACE}$, and let $M$ be the 2bafa($k$) recognizing $L$. We will first modify the construction in the proof of Lemma \ref{PinCDEBw} to build a verifier $V$ which rejects strings not in $L$ with probability 1, and accepts the members of $L$ with probability at least $2^{-\left\lceil \log k \right\rceil}$. We will then convert $V$ to a debate checker that fits the strong definition, in a manner similar to what we did in Lemma \ref{PinCDEB}.

 $V$ initially picks one of $M$'s heads randomly, and simulates $M$ according to existential and universal choices suggested by P1 and P0, respectively. In Lemma \ref{PinCDEBw}, P1 was the player who is supposed to send the symbols scanned by the heads of the multihead automaton that is being simulated. This is not acceptable in this case, since the configuration of $M$ during its execution depends on choices made by its universal player, and Player 1 in the zero-information protocol that we are designing cannot have that information. P0, on the other hand, can see the full configuration of $M$, just as the universal player of a 2bafa($k$), and it is therefore P0 who provides the symbols scanned by the heads of $M$. $V$ accepts the input if it reaches an accept state of $M$, or detects an inconsistency between its own input and the claim made for the head that it is tracking  by P0. Otherwise, $V$ rejects.

Therefore, $V$ accepts inputs in $L$ with probability at least $2^{-r}$, using $r=\left\lceil \log k \right\rceil$ random bits. It rejects inputs not in $L$ with probability 1. We construct a new verifier $V'$ that accepts directly with probability $\frac{2^r -1}{2^{r+1}}$,  and transfers control to $V$ with the remaining probability. Then, $V'$ accepts inputs in $L$ with probability at least $\frac{2^{2r} +1}{2^{2r+1}} $, and rejects the inputs not in $L$ with probability $\frac{2^r +1}{2^{r+1}}$. $V'$ is therefore a strong zero-information debate checker for $L$ with error bound $\frac{2^{2r} -1}{2^{2r+1}}$.  
\end{proof}

\begin{lemma}
\label{ZDEBwinPSPACEalt}
${\mathsf{ZDEB}}(log, exp , cons)  \subseteq  \mathsf{PSPACE} $.
\end{lemma}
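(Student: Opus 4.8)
The goal is to show that any language $L$ with a zero-information debate checkable by a logspace verifier $V$ running in exponential time and using a constant number $r$ of random bits lies in $\mathsf{PSPACE}$. The plan is to exhibit a blind alternating Turing machine (BATM) running in logarithmic space that recognizes $L$; by the facts collected in Section~\ref{subsec:alternation} (in particular $\mathsf{ZDEB}(log,\infty,0)=\mathsf{BASPACE}(log)=\mathsf{PSPACE}$, and more directly $\cup_k \mathsf{2BAFA}(k)=\mathsf{PSPACE}$), this suffices. As in the proof of Lemma~\ref{CDEBwinP}, the first step is to \emph{derandomize}: replace $V$ by the ensemble $S=\{V_1,\dots,V_{2^r}\}$ of $2^r$ deterministic logspace verifiers, one for each random string of length $r$, and have the alternating machine track the combined ``ensemble configuration'' of all of them simultaneously in logarithmic space. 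Acceptance of the simulated debate is decided by the condition that strictly more than half of the $V_i$ accept.

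\textbf{Handling the zero-information condition.} The key structural point is that in a zero-information debate, Player~1 sees \emph{none} of Player~0's symbols (since $\Gamma_0=\emptyset$, all of P0's symbols are private), so well-formedness forces every P1 node at a given level of the debate subtree to emit the \emph{same} symbol: P1's strategy is just a fixed sequence of symbols, independent of P0's play. In the alternating-machine simulation this means the existential branchings producing $\Gamma_1$ symbols must \emph{not} be allowed to depend on the universal choices already made — which is exactly what the ``blind'' restriction enforces. Concretely, $M$ operates in rounds: in each round it first existentially guesses the next P1 symbol (a universal-memory-independent choice, legal for a BATM), feeds it to all not-yet-halted $V_i$ that are waiting on C1, then universally branches over the next $\Gamma_0\cup\Delta$ symbol (a P0 move, which the blind machine is allowed to make privately), feeds it to the $V_i$ waiting on C0, and continues advancing each $V_i$ until its next reading configuration or halt. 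Because each $V_i$ uses $O(\log n)$ space and there are only constantly many of them, the whole ensemble configuration is $O(\log n)$ bits. This is the natural place where the second and third lines of Lemma~\ref{CDEBwinCDEB} were designed to feed in: the reachable ensemble states form a set of at most exponential size, so the relevant debates are of at most exponential length, and — crucially — a logspace BATM can count up to doubly-exponential values (Theorem~1 of \cite{PR79}), hence it can certainly clock an exponential-time simulation and abort-and-reject if $V$ runs past its time bound or loops without reading.

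\textbf{Correctness and the main obstacle.} For $w\in L$: the winning debate subtree for P1 is a fixed symbol sequence; $M$ follows it on the existential branches, and since for \emph{every} P0 play the ensemble reaches a majority-accepting configuration within the exponential time bound, $M$ accepts on every universal branch. For $w\notin L$: against any existential strategy of $M$ (equivalently, any fixed P1 sequence), there is a P0 play driving the ensemble to never reach a majority of accepting configurations within the time bound, so the matching universal branch of $M$ rejects. I expect the main obstacle to be the bookkeeping that keeps the simulation genuinely \emph{blind} and \emph{logspace} at once: one must verify that the alternating machine never needs to remember, on an existential move, anything about the universal moves played so far — this is immediate from the zero-information/well-formedness structure, but it needs to be stated carefully — and that the time-bound clock (the only part requiring the doubly-exponential counting trick of \cite{PR79}) is maintained in a way consistent with the blindness restriction, i.e. the counter can live in the public memory and be advanced on every step without reference to private contents. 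A secondary subtlety is ensuring that $M$ correctly handles $V$'s possible divergence on a fixed finite prefix of the debate (an ``evil'' player trying to make $V$ loop forever) — this is dealt with exactly as in Lemma~\ref{CDEBwinCDEB}, by having $M$ also clock the internal steps of each $V_i$ between successive reads and reject if any $V_i$ exceeds its polynomial-in-configuration-count bound without consuming a symbol. Putting these together yields $L\in\mathsf{BASPACE}(log)=\mathsf{PSPACE}$.
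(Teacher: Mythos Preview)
Your proposal is correct and follows essentially the same approach as the paper: derandomize the constant-coin verifier into an ensemble of $2^r$ deterministic logspace machines, simulate them in a logspace BATM with existential branchings producing $\Gamma_1$ symbols and private universal branchings producing $\Delta$ symbols, and accept when a strict majority of the $V_i$ accept. The paper's own proof is quite terse (it just points back to Lemma~\ref{CDEBwinP} and remarks that the simulation lives on the universal player's private tape); you spell out the well-formedness/blindness correspondence and the termination bookkeeping in more detail, but the argument is the same. One small slip: the exponential-length clock you invoke via the \cite{PR79} counting trick relies on private universal branching, so that counter cannot ``live in the public memory'' as you suggest---but since the existential player never needs to read it, keeping it private is harmless and the construction goes through.
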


\begin{proof}
Let $L$ be any language in $\mathsf{ZDEB}(log, exp, cons)$, and let $V$ be the constant-coin, exponential-time logspace verifier of zero-information debates on $L$. 
We build a logspace blind alternating Turing machine $M$ recognizing $L$. The construction is almost identical to that of the proof of Lemma \ref{CDEBwinP}. Since the debates of $V$ are zero-information, $M$ simulates the $2^r$ deterministic verifiers (the $V_i$s) on the private tape of the universal player, feeding them symbols from the private alphabet $\Delta$, ensuring that the moves of the existential player obey the zero-information condition. Logaritmic space is sufficient for this simulation. As in Lemma \ref{CDEBwinP}, $M$ accepts if and only if more than half of the $V_i$s accept, which happens if and only if the input is in $L$.
\end{proof}

Lemmas \ref{PSPACEinZDEB} and \ref{ZDEBwinPSPACEalt} form
\begin{theorem}
\label{PSPACEeqZDEB}
$\mathsf{ZDEB}(cons, \infty , cons) = \mathsf{ZDEB}(log, exp , cons) = \mathsf{PSPACE}$.
\end{theorem}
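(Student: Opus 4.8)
The plan is to close a ring of inclusions around $\mathsf{PSPACE}$, using the three lemmas just established together with two elementary containments. The lower bound $\mathsf{PSPACE}\subseteq\mathsf{ZDEB}(cons,\infty,cons)$ is precisely Lemma \ref{PSPACEinZDEB}, and the upper bound $\mathsf{ZDEB}(log,exp,cons)\subseteq\mathsf{PSPACE}$ is Lemma \ref{ZDEBwinPSPACEalt}. What remains is only to connect the constant-space, unbounded-time class at one end of the ring to the logspace, exponential-time class at the other, and this is done by stringing together standard monotonicity observations.

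First I would note that a constant-space verifier is a special case of a logspace verifier, so $\mathsf{ZDEB}(cons,\infty,cons)\subseteq\mathsf{ZDEB}(log,\infty,cons)$; one cannot shortcut this by writing $exp$ in place of $\infty$, since an unbounded time allowance is a \emph{weaker} restriction, not a stronger one, so the connection must be routed through Lemma \ref{CDEBwinCDEB}. Next, every debate checker meeting the strong definition of Section \ref{subsec:model} also meets the weak one, giving $\mathsf{ZDEB}(log,\infty,cons)\subseteq\mathsf{ZDEB}_w(log,\infty,cons)$. Finally, the zero-information line of Lemma \ref{CDEBwinCDEB} states $\mathsf{ZDEB}_w(log,\infty,cons)\subseteq\mathsf{ZDEB}(log,exp,cons)$; this is the link that simultaneously discards the unbounded-time allowance (replacing it with an exponential-length clock, maintainable in logarithmic space because blind alternation can count that high) and returns from the weak definition to the strong one. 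Chaining the five inclusions gives
\begin{align*}
\mathsf{PSPACE} &\subseteq \mathsf{ZDEB}(cons,\infty,cons) \subseteq \mathsf{ZDEB}(log,\infty,cons) \\
&\subseteq \mathsf{ZDEB}_w(log,\infty,cons) \subseteq \mathsf{ZDEB}(log,exp,cons) \subseteq \mathsf{PSPACE},
\end{align*}
so every class occurring in the chain equals $\mathsf{PSPACE}$; in particular $\mathsf{ZDEB}(cons,\infty,cons)=\mathsf{ZDEB}(log,exp,cons)=\mathsf{PSPACE}$.

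There is no genuine obstacle at this level: all the substantive work has been quarantined inside Lemmas \ref{PSPACEinZDEB}, \ref{ZDEBwinPSPACEalt}, and \ref{CDEBwinCDEB}. The only point requiring care in assembling them is the orientation of the resource parameters — knowing which of $cons$ versus $log$, and which of $exp$ versus $\infty$, is the more restrictive — so that each link of the ring points in the right direction. If I were to identify where the difficulty actually resides, it is in the zero-information case of Lemma \ref{CDEBwinCDEB}: one argues that a zero-information debate watched by a constant-randomness logspace verifier never needs to run longer than exponentially many steps (otherwise the finitely many sets of ensemble states of the derandomized verifier would repeat, which is fatal for P1), and then equips the verifier with an exponential-time cutoff using the logspace counting capability of blind alternating machines from \cite{PR79}.
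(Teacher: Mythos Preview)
Your proposal is correct and follows essentially the same route as the paper, which simply states that Lemmas~\ref{PSPACEinZDEB} and~\ref{ZDEBwinPSPACEalt} ``form'' the theorem. You have merely made explicit the linking inclusions (through $\mathsf{ZDEB}(log,\infty,cons)$, the weak class, and the zero-information line of Lemma~\ref{CDEBwinCDEB}) that the paper leaves implicit; there is no substantive difference in approach.
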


We conclude this section with a characterization for partial-information debates.
\begin{theorem}
\label{EXPeqPDEB}
$\mathsf{PDEB}(cons, \infty , cons) = \mathsf{PDEB}(log, exp , cons) = \mathsf{EXPTIME}$.
\end{theorem}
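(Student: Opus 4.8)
The plan is to prove $\mathsf{PDEB}(cons, \infty, cons) = \mathsf{PDEB}(log, exp, cons) = \mathsf{EXPTIME}$ by mirroring exactly the two-sided argument that was used for the zero-information case in Theorem \ref{PSPACEeqZDEB}, replacing each ingredient by its private-alternation analogue. Concretely, I would establish the chain of inclusions
\[
\mathsf{EXPTIME} \subseteq \mathsf{PDEB}(cons, \infty, cons) \subseteq \mathsf{PDEB}_w(log, \infty, cons) \subseteq \mathsf{PDEB}(log, exp, cons) \subseteq \mathsf{EXPTIME},
\]
where the middle inclusion is trivial (a constant-space verifier is in particular a logspace verifier, and strong debate checking implies weak debate checking), and the third inclusion is precisely the third line of Lemma \ref{CDEBwinCDEB}. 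So only the first and last inclusions need genuine work.

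For the lower bound $\mathsf{EXPTIME} \subseteq \mathsf{PDEB}(cons, \infty, cons)$, I would use the fact that $\cup_{k\geq 1}\mathsf{2PAFA}(k) = \mathsf{PASPACE}(log) = \mathsf{EXPTIME}$, so any $L \in \mathsf{EXPTIME}$ is recognized by a private alternating multihead finite automaton $M$ with some constant number $k$ of heads. Then I adapt the construction of Lemma \ref{PinCDEBw} / Lemma \ref{PSPACEinZDEB}: the verifier $V$ randomly picks one of $M$'s $k$ heads (using $\lceil \log k\rceil$ bits), uses its single input head to track that head, and simulates $M$ step-by-step along the existential/universal choices suggested by P1 and P0. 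The new subtlety relative to both previous cases is that in a 2pafa($k$) the input heads are private to the universal player, so the head positions — and hence the symbols scanned — are information that P1 is not permitted to know. Consequently it must be P0 who supplies the list of $k$ scanned symbols after each step (as in Lemma \ref{PSPACEinZDEB}), and P0 is allowed to communicate these through the private alphabet $\Delta$ so that the well-formedness of the debate subtree is respected; meanwhile P1's public choices of existential moves depend only on the public ($\Gamma_0$) portion of P0's messages, which is exactly the information the existential player of a 2pafa($k$) has. The verifier rejects if it ever catches P0 lying about the head it is tracking, or if $M$ is seen to reject; it accepts if it sees $M$ accept. This gives, for $w \in L$, a winning P1 strategy making $V$ accept with probability $1$ (since all of P0's head-symbol claims must be truthful for P0 to avoid letting $V$ reject, and then $V$ faithfully simulates a rejecting path that doesn't exist), and for $w \notin L$, a P0 strategy where $V$ accepts with probability at most $1 - 2^{-r}$. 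As in Lemma \ref{PinCDEB} and Lemma \ref{PSPACEinZDEB}, I then pad $V$ with $r+1$ extra coins that reject outright with probability $\tfrac{2^r-1}{2^{r+1}}$ to push this into the strong definition with error bound $\tfrac{2^{2r}-1}{2^{2r+1}} < \tfrac12$. I should also handle, exactly as in Lemma \ref{PinCDEBw}, the restart/counter mechanism so that an evil P1 cannot stall forever; with the counter set appropriately (or set to $1$ and then padded) this contributes only a bounded amount to the error.

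For the upper bound $\mathsf{PDEB}(log, exp, cons) \subseteq \mathsf{EXPTIME}$, I would imitate Lemma \ref{ZDEBwinPSPACEalt}: given a constant-coin, exponential-time, logspace verifier $V$ of partial-information debates for $L$, derandomize it into an ensemble of $2^r$ deterministic logspace verifiers $\{V_1,\dots,V_{2^r}\}$ and build a private alternating logspace Turing machine $M$ that simulates the ensemble in a time-sharing fashion. When the ensemble is ready to consume a P1 symbol, $M$ branches existentially to produce a symbol from $\Gamma_1$; when ready to consume a P0 symbol, $M$ branches universally to produce a symbol from $\Gamma_0 \cup \Delta$, and crucially the symbols of $\Delta$ (and the running configurations of the $V_i$, which may depend on those symbols) are written on the \emph{private} tape of the universal player — this is exactly what the PATM model permits and what makes the existential player's strategy depend only on the public messages, matching the well-formedness condition. $M$ accepts iff more than half of the $V_i$ accept. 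Since $M$ is a logspace PATM, $L \in \mathsf{PASPACE}(log) = \mathsf{EXPTIME}$. One bookkeeping point: $V$ runs in exponential time, so $M$ must be able to clock/bound the simulation; since a PATM (even a BATM, cf.\ Theorem 1 of \cite{PR79}) with logarithmic space can count up to $2^{2^{O(\log n)}}$, hence certainly to $2^{n^{O(1)}}$, this causes no difficulty, and in fact $M$ need not count at all since $\mathsf{PASPACE}(log)$ already absorbs the bound.

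The main obstacle I anticipate is not any single deep step but getting the information-flow direction right and consistent across both halves: in the lower bound, P0 (not P1) must be the one announcing head symbols, and those announcements must go through $\Delta$ so the debate subtree is well-formed; in the upper bound, dually, the simulated private messages and the verifier configurations that depend on them must live on the universal player's private tape. Once one observes that the 2pafa($k$) model places the input heads on the private side of the universal player — exactly the reverse of the complete-information Lemma \ref{PinCDEBw} and parallel to the zero-information Lemma \ref{PSPACEinZDEB}, but now with a nonempty public alphabet $\Gamma_0$ as well — both constructions go through by routine adaptation of the already-proven lemmas, and the probability bookkeeping is identical to Lemmas \ref{PinCDEB} and \ref{PSPACEinZDEB}.
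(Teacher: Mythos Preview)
Your high-level plan matches the paper's proof exactly: adapt Lemma~\ref{PSPACEinZDEB} to 2pafa($k$)'s for the lower bound, and adapt Lemma~\ref{ZDEBwinPSPACEalt} to PATMs for the upper bound, with the middle inclusions coming from Lemma~\ref{CDEBwinCDEB}. The upper-bound half is correct and essentially identical to the paper's.

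However, the lower-bound construction contains a genuine error of direction. You correctly identify that P0 must supply the scanned-symbol list (since the heads of a 2pafa($k$) are private to the universal player), but you then have $V$ \emph{reject} when it catches P0 lying. This is backwards: P0 is the refuter and \emph{wants} $V$ to reject, so under your rule a malicious P0, on an input $w\in L$, can simply lie about some head; if caught, $V$ rejects, and if not caught, P0 can steer the corrupted simulation to a reject state of $M$ anyway. Hence your claim ``for $w\in L$, $V$ accepts with probability $1$'' fails, and the parenthetical justification (``P0's claims must be truthful for P0 to avoid letting $V$ reject'') misidentifies P0's incentive. The correct rule, as in Lemma~\ref{PSPACEinZDEB}, is that $V$ \emph{accepts} upon catching P0 in a lie; then $w\in L$ yields acceptance with probability at least $2^{-r}$, $w\notin L$ yields rejection with probability $1$ (P0 plays honestly), and the padding must add \emph{acceptance} probability $\tfrac{2^r-1}{2^{r+1}}$, not rejection probability. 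Your citation ``as in Lemma~\ref{PinCDEB} and Lemma~\ref{PSPACEinZDEB}, pad with coins that reject outright'' conflates two lemmas that pad in opposite directions; only the latter is the right template here. With this one fix the argument goes through as you describe.
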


\begin{proof}
We will describe the necessary modifications to the proofs of Lemma \ref{PSPACEinZDEB} and \ref{ZDEBwinPSPACEalt}.

First, we can design partial-information debates with constant-space verifiers allowed to use only a constant number of random bits to simulate the 2pafa($k$) of any given language in $\mathsf{EXPTIME}$, thereby showing $\mathsf{EXPTIME} \subseteq \mathsf{PDEB}(cons, \infty , cons)$. The only difference with the construction in Lemma \ref{PSPACEinZDEB} is that P0 is allowed to use the public alphabet $\Gamma_0$ as well as the  private alphabet $\Delta$ for suggesting the public and private moves of the universal player of the simulated 2pafa($k$).

Second, simulation of exponential-time constant-randomness logspace partial-information debate systems by logspace private alternating Turing machines implies $\mathsf{PDEB}(log, exp , cons) \subseteq \mathsf{EXPTIME}$. The simulation is similar to the one in Lemma \ref{ZDEBwinPSPACEalt}, but the private alternating TM now produces symbols by universal branching from the set $\Gamma_0 \cup \Delta$ when it needs the next message of P0 in the debate. If the produced symbol is from $\Gamma_0$, the PATM lets the existential player know this symbol by writing  it in a special memory cell on the common work tape. The task of simulating the finitely many exponential-time deterministic verifiers is completed in exponential time.  
\end{proof}

\section{Logarithmic space and polynomial time}
\label{sec:polytime}

Condon \cite{Co93B} has shown that languages recognized by one-way interactive proof systems\footnote{One-way interactive proof systems are just debate checkers which listen only to P1. See, for instance, \cite{SY12} for a review.} with simultaneously polynomial-time and logarithmic-space bounded verifiers are polynomial time reducible to the max word problem for matrices, which is a variation of the well-known word problem for matrices.  We use  her technique below to show that all languages which have partial-information debates checkable by similarly bounded verifiers are in $\mathsf{PSPACE}$.

\begin{lemma}
\label{PDEB(log,poly)inPSPACE}
$\mathsf{PDEB}(log, poly, \infty) \subseteq \mathsf{PSPACE}$.
\end{lemma}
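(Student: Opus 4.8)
The plan is to follow Condon's method \cite{Co93B} of summarizing the verifier's probabilistic behaviour by stochastic matrices, one per debate symbol, and then to express membership in $L$ as the truth of a polynomially long alternating quantified statement about a matrix product, which is decidable in polynomial space.

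\emph{Setup.} Let $V$ be the logspace verifier of partial-information debates for $L$ running within time $T=p(n)$, and assume (adjoining a step counter to the control) that every configuration of $V$ is describable in $O(\log n)$ bits, so $V$ has $\mathrm{poly}(n)$ configurations; since $V$ halts within $T$ steps on every debate, it reads at most $T$ debate symbols, alternately from $C1$ and $C0$ starting with $C1$. Write a debate as $a_1 b_1 a_2 b_2\cdots$ with $a_i\in\Gamma_1$ and $b_i\in\Gamma_0\cup\Delta$. For $a\in\Gamma_1$ let $M_a$ be the stochastic matrix over the reading configurations of $V$ together with two absorbing states $\mathsf{acc},\mathsf{rej}$, whose $(c,c')$ entry is the probability that $V$, started in $c$ with $a$ in $C1$, carries out its internal computation (tossing coins as needed) and next reaches the state of being about to read $C0$ in configuration $c'$, mass that halts being routed to $\mathsf{acc}$ or $\mathsf{rej}$; define $N_b$ for $b\in\Gamma_0\cup\Delta$ symmetrically, and let $\mu_0$ be the distribution over configurations at which $V$ first becomes about to read $C1$. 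There are only $|\Gamma_1|+|\Gamma_0|+|\Delta|$ such matrices, each of dimension $\mathrm{poly}(n)$ with dyadic-rational entries of denominator at most $2^T$, and each computable from $w$ in polynomial time by iterating $V$'s one-step rule. For any debate $\pi=a_1 b_1 a_2 b_2\cdots$ one then has $P(a)^w_{(V,\pi)}=\bigl[\mu_0 M_{a_1} N_{b_1} M_{a_2} N_{b_2}\cdots M_{a_T} N_{b_T}\bigr]_{\mathsf{acc}}$ (extra factors being harmless, as $\mathsf{acc},\mathsf{rej}$ are fixed), a dyadic rational of denominator $2^{\mathrm{poly}(n)}$ computable in polynomial time from $w$ and the length-$T$ prefix of $\pi$, and $P(a)^w_{(V,\pi)}+P(r)^w_{(V,\pi)}=1$.

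\emph{Reduction.} Since the error $\varepsilon$ is bounded away from $\tfrac12$, $w\in L$ forces $\max_\sigma\min_{\pi\in\sigma} P(a)^w_{(V,\pi)}\ge 1-\varepsilon>\tfrac12$ while $w\notin L$ forces it to be $\le\varepsilon<\tfrac12$, where $\sigma$ ranges over well-formed debate subtrees and $\pi$ over the paths of $\sigma$; so it suffices to decide whether this \emph{pure} upper value is at least $\tfrac12$. A well-formed subtree is precisely a P1 strategy in which the $i$th symbol $a_i$ may depend only on the public projection $\hat{b}_1,\dots,\hat{b}_{i-1}$ of P0's earlier moves, where $\hat{b}_j:=h(b_j)\in\Gamma_0\cup\{\flat\}$, whereas P0, who sees the whole debate, may choose each $b_j$ arbitrarily. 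The key point is that P0's private symbols never enter any $\hat{b}_j$, hence influence no P1 move, and that P0, being the minimizing party, may as well commit to them only after every other choice. Consequently the pure upper value is at least $\tfrac12$ if and only if
\[
\exists a_1\,\forall \hat{b}_1\,\exists a_2\,\forall \hat{b}_2\ \cdots\ \exists a_T\,\forall \hat{b}_T\ \ \forall d_1\,\forall d_2\ \cdots\ \forall d_T\ :\ \ \bigl[\mu_0 M_{a_1} N_{b_1}\cdots M_{a_T} N_{b_T}\bigr]_{\mathsf{acc}}\ \ge\ \tfrac12 ,
\]
where each $a_i$ ranges over $\Gamma_1$, each $\hat{b}_i$ over $\Gamma_0\cup\{\flat\}$, each $d_i$ over $\Delta$, and $b_i$ denotes $\hat{b}_i$ if $\hat{b}_i\in\Gamma_0$ and $d_i$ otherwise. (That the quantifier order would also let $a_i$ depend on the earlier $a_j$'s is immaterial, since each such $a_j$ is already a function of $\hat{b}_1,\dots,\hat{b}_{j-1}$.) Both implications are checked directly: a winning choice for the $\exists$-player yields, via its Skolem functions, a well-formed $\sigma$ beating every P0 response, and conversely any good $\sigma$ gives such Skolem functions. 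We deliberately keep pure $\exists$-choices and appeal to no minimax theorem, because the model grants P1 only deterministic subtrees.

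\emph{Evaluation and the hard point.} The displayed formula has $O(T)=\mathrm{poly}(n)$ quantifier blocks, each over a constant-size set, followed by a predicate that — given the quantified symbols and the precomputed matrices — multiplies $2T$ explicitly given $\mathrm{poly}(n)\times\mathrm{poly}(n)$ matrices with $\mathrm{poly}(n)$-bit numerators over the common denominator $2^{\mathrm{poly}(n)}$ and compares an entry with $\tfrac12$, all in polynomial time. Evaluating it by the obvious recursion (loop over each block's constantly many values; disjoin for $\exists$, conjoin for $\forall$; run the predicate at the leaves) uses a recursion stack of polynomial depth carrying $O(1)$ bookkeeping plus one quantified symbol per frame, and polynomial space at the leaves; together with the polynomial-time precomputation of the $M_a$, $N_b$, and $\mu_0$, this places $L$ in $\mathsf{PSPACE}$. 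The one step that requires genuine care is the reduction: verifying that moving all of P0's private-symbol quantifiers to the tail of the prefix neither weakens P0 nor strengthens P1, so that the alternating formula computes exactly the pure upper value $\max_\sigma\min_{\pi\in\sigma}P(a)^w_{(V,\pi)}$ and faithfully encodes both the well-formedness restriction on P1 and P0's complete view of the debate. The remaining ingredients — the matrix encoding, the bit-size bookkeeping, and the polynomial-space evaluation of a polynomially long alternating formula with a polynomial-time predicate — are routine.
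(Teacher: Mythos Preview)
Your proof is correct, and it shares the matrix encoding with the paper's argument, but the two diverge in how the well-formedness constraint on P1 is handled. The paper constructs a polynomial-time \emph{private} alternating Turing machine: existential moves pick P1's symbols, universal moves pick P0's symbols on the private tape, and whenever the chosen P0 symbol lies in $\Gamma_0$ the machine copies it to the public tape so that the existential player can see it; after $t$ rounds it multiplies the chosen matrices and compares with $1-\varepsilon$. The containment in $\mathsf{PSPACE}$ then follows by invoking Fact~\ref{fact:patime} ($\mathsf{PATIME}(poly)=\mathsf{PSPACE}$). You instead avoid the PATM machinery entirely by the quantifier-shuffling observation that, once P1's strategy is fixed as Skolem functions of the public projections $\hat b_1,\dots,\hat b_{i-1}$, the minimum over P0's moves is just a minimum over all $(b_1,\dots,b_T)$, so the $\forall d_i$ blocks can be pushed to the end without changing the value; this yields an ordinary polynomially long alternating formula with a polynomial-time matrix predicate, evaluable directly in $\mathsf{PSPACE}$. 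Your route is more self-contained (it does not cite Reif's $\mathsf{PATIME}=\mathsf{DSPACE}$ theorem) and makes the information-hiding argument explicit, while the paper's route is shorter because the PATM model absorbs exactly that argument into its semantics. Either way the reduction is the same Condon-style matrix product; only the target model differs.
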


\begin{proof}
Let  $V$ be a logspace verifier that checks a partial-information debate for a language $L$ with error probability $\epsilon$, halting after reading $2t$ symbols of the debate for some polynomial $t$ in the input length. Let $\Gamma_1$, $\Gamma_0$, and $\Delta$ denote the public and private alphabets of the players as usual. Without loss of generality, assume that the initial and halting states of $V$ are in the set $R_1$ (associated with reading cell C1). Since $V$ uses logarithmic space, we can assume that it has $2m$ reading configurations, where $m$ is a polynomial in the input length. Order these configurations so that the first $m$ have their state components in $R_1$, whereas the ones from positions $m+1$ to $2m$ have their state components in $R_0$. Make sure that the initial configuration is at position 1. We will build a polynomial-time private alternating Turing machine $M$ for $L$.

On a specific input string $w$, we define $p(i,j, \sigma)$ as the probability of $V$ eventually reaching reading configuration $j$ (without visiting any other reading configurations in between) from reading configuration $i$ where it reads the symbol $\sigma$ in the corresponding reading cell. Since the computation of $V$ alternates between reading C1 and C0, $p(i,j, \sigma)=0$ for both $i,j \leq m$, and $i,j > m$. Furthermore, the value of any $p(i,j, \sigma)$ depends only on $w$, $i$, $j$, $\sigma$, and the transition function of $V$, and can be computed in polynomial time, using the procedure explained in detail in the proof of  Theorem 2.1 in \cite{Co93B}\footnote{Note that what we call the ``reading configurations" are named ``communication configurations" in \cite{Co93B}.}. 
We define two sets  $W_{P0} =\left\{ W_{0,\sigma} \text{ }|\text{ } \sigma \in \Gamma_0 \cup \Delta  \right\} $ and $W_{P1} = \left\{ W_{1,\sigma} \text{ }|\text{ } \sigma \in \Gamma_1 \right\} $, where each $W_{0,\sigma}$ is an   $m \times m$  matrix containing $p(i+m,j,\sigma)$ as the $j$th entry of its $i$th row, and the $W_{1,\sigma}$s are matrices with $p(i,j+m,\sigma)$ as the $j$th entry of the $i$th row, for $1\leq i,j \leq m$.

$M$ prepares these two sets of matrices, and then starts picking matrices from $W_{P1}$ and $W_{P0}$ via existential and universal moves, respectively. Universal choices are made on the private work tape of $M$. If the universally chosen matrix corresponds to a symbol $\sigma$ from $\Gamma_0$, $M$ lets the existential player know about the decision by writing $\sigma$ on the public work tape. $M$ continues this process until the existential and universal players have picked $t$ matrices each, at which point $M$ calculates the product $W$ of these matrices in the order they were chosen. Let $v$ and $f$ be two vectors with $m$ entries. $v$ has a 1 in the first position, and 0 everywhere else. $f$ has 1's in the positions corresponding to accepting configurations according to the ordering we defined above, and 0 everywhere else. $M$ accepts if the product $vWb^T$ is greater than $1-\epsilon$. Otherwise, it rejects. 

We have seen to it that the $j$th entry of $vW$ is the probability that $V$ reaches the $j$th reading configuration after reading the partial-information debate corresponding to the moves of $M$, and that $M$ accepts if and only if the overall accepting probability of $V$ in this case is sufficiently high. We conclude that $M$ recognizes $L$.
\end{proof}

Lemma \ref{PDEB(log,poly)inPSPACE} and Fact \ref{fact:pdeblogpoly0} show that no amount of randomness can help a polynomial-time logspace verifier reading partial-information debates to ``break the $\mathsf{PSPACE}$ barrier." We now show that this is different for complete-information debates. Recall that $\mathsf{CDEB}(log, \infty,0)=\mathsf{P}$.

\begin{lemma}
\label{PSPACEinCDEB(log,poly)}
\begin{tabular}{@{}l@{}}
$\mathsf{PSPACE} \subseteq \mathsf{CDEB^*}(log, poly,poly)$, \\
$\mathsf{PSPACE} \subseteq \mathsf{ZDEB^*}(log, poly,poly)$.
\end{tabular}
\end{lemma}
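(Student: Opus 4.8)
The plan is to lift the probabilistically checkable debate construction of Condon et al.\ \cite{CFLS95} for $\mathsf{PSPACE}$ to our sequential model, keeping the verifier in logarithmic space by arithmetizing over a field of only polynomially many elements (the device behind the logspace interactive proofs of Condon \cite{Co93B,Co91}, and hence behind Fact~\ref{fact:pdeblogpoly0}); allowing polynomially many coins and one-pass access to the whole debate makes this comfortable. Begin by fixing a logspace many-one reduction of $L$ to $\mathsf{TQBF}$, so that $w \in L$ iff a quantified Boolean formula $\Psi_w = Q_1 x_1 \cdots Q_n x_n\,\phi$ with a $3$-CNF matrix $\phi$, everything of size $\mathrm{poly}(|w|)$ and with bits recomputable from $w$ in logspace, is true.

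For the complete-information case I would split the debate into a \emph{game phase} and a \emph{certificate phase}. In the game phase the players assign $x_1,\dots,x_n$ in order, P1 choosing the existentially quantified variables and P0 the universally quantified ones; since these symbols are public, P1 sees P0's moves and can respond optimally, matching the semantics of the quantifier alternation (strict single-symbol alternation is restored by the padding idea of Lemma~\ref{PinCDEBw}). In the certificate phase P1 re-announces an assignment $x'_1,\dots,x'_n$ together with, for each clause, one literal and the claimed value of its variable, asserting that this literal satisfies the clause. The verifier $V$, using $O(\log n)$ work space throughout, checks legality and keeps a running fingerprint of the game-phase assignment (e.g.\ the value at a random point of the polynomial over $\mathbb{F}_q$, $q=\mathrm{poly}(n)$, whose coefficients are the emitted bits); in the certificate phase it recomputes each clause from $w$, checks the named literal against the claimed variable value and the clause, and keeps the matching fingerprint of $x'$; finally it accepts iff every clause check passed and the two fingerprints agree. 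Completeness is perfect: when $\Psi_w$ is true, P1 plays a winning existential strategy and then re-announces that very assignment with genuine satisfying literals, so $V$ accepts with probability $1$ against every P0 play and for every coin outcome. For soundness, when $\Psi_w$ is false, for every P1 strategy P0 can drive the game to an assignment $x$ with $\phi(x)=0$; P1 then either re-announces $x$ (and fails a clause check) or re-announces $x'\neq x$ (and the fingerprints disagree with probability $\geq 1-n/q$), so $V$ rejects with probability $\geq 1-n/q$, which enlarging $q$ makes $\geq 1-\varepsilon$, and sequential repetition---preserving logarithmic space and polynomial time---drives the error below any $\varepsilon<\tfrac12$. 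This yields $\mathsf{PSPACE}\subseteq\mathsf{CDEB}^*(log,poly,poly)$, and the whole gain over $\mathsf{CDEB}(log,\infty,0)=\mathsf{P}$ is powered by the coins used for the fingerprint.

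The main obstacle is the zero-information case, where P1 is blind to P0 and hence effectively oblivious, so the reactive game phase is no longer available (an oblivious existential strategy need not win even a true $\mathsf{TQBF}$ instance). I would replace it with a sum-check-style protocol over a field of cardinality $\mathrm{poly}(n)$ (so field elements fit in $O(\log n)$ bits) applied to a Shamir-style arithmetization of $\Psi_w$ with degree-reduction operators inserted so that every univariate polynomial arising has degree $O(1)$ per variable: P1 commits---blind---to the sequence of claimed round polynomials peeling off the operators, P0's choices for the universally quantified positions are delivered as private symbols that only $V$ reads, and $V$ spends its $\mathrm{poly}(n)$ coins to pick a random round and re-test its consistency equation at a freshly drawn random point, which is exactly the ingredient a $0$-randomness verifier (confined to $\mathsf{P}$) lacks. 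Arranging that P1's $i$-th message depends only on the round index, not on the invisible P0, while still obtaining perfect completeness and soundness error $\le\mathrm{poly}(n)/p$ from the sum-check analysis, and keeping $V$ within $O(\log n)$ bits by evaluating each streamed polynomial at the needed points by Horner's rule with running accumulators, is the technical heart of the argument---fusing the zero-information (BATM) side of the known characterizations with the logspace-IP arithmetization. Once this is in place, sequential amplification gives $\mathsf{PSPACE}\subseteq\mathsf{ZDEB}^*(log,poly,poly)$.
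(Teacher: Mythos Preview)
Your complete-information argument is workable, but it takes a noticeably heavier route than the paper. The paper does not go through $\mathsf{TQBF}$ or any arithmetization: it simply uses $\mathsf{PSPACE}=\mathsf{ATIME}(poly)$, has P0 stream the full sequence of configurations of the polynomial-time ATM (each padded to length $t$), and lets the verifier pick a random position $k\in\{1,\dots,t-2\}$ and check only the three-symbol window at $k$ across every pair of consecutive configurations. A dishonest P0 who wants to end in a rejecting configuration on a yes-instance must break a local transition somewhere, and is caught with probability at least $1/t$; repeating the whole simulation $\lceil\ln(1/\varepsilon)\rceil\, t$ times gives error below $\varepsilon$. This is elementary and avoids both the fingerprint and the somewhat delicate clause-check step in your certificate phase (where you still need to tie each ``claimed value'' back to the streamed $x'$, which your sketch does not quite spell out).

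For the zero-information case your proposal has a genuine gap. You correctly identify that an oblivious P1 cannot play the reactive game, and propose a sum-check over a small field with P1 committing \emph{blind} to the round polynomials. But in the Shamir protocol (even with degree reduction) the honest prover's $i$th univariate polynomial is a function of the random challenges $r_1,\dots,r_{i-1}$; a P1 who sees neither P0's private symbols nor $V$'s coins has no way to produce those polynomials, so completeness fails. Sending the polynomials as explicit functions of the earlier challenges blows up to exponential size, and having P0 supply the challenges does not help because the very dependence you need is exactly what P1 cannot have in a zero-information debate. You flag this as ``the technical heart'' but do not resolve it.

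The paper sidesteps the whole difficulty with a single observation: by Fact~\ref{fact:patime}, $\mathsf{PSPACE}=\mathsf{BATIME}(poly)$, so one can simulate a polynomial-time \emph{blind} alternating machine, whose existential player is already oblivious to the universal moves. The configuration-streaming protocol above then works verbatim for $\mathsf{ZDEB}^*$, with the only change that P0 uses the private alphabet $\Delta$ for both its universal choices and its configuration descriptions. No arithmetization is needed.
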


\begin{proof}
Let $L$ be a language in $\mathsf{PSPACE}$, and let $M$ be the polynomial-time alternating Turing machine recognizing $L$. Let $t$ be the time bound of $M$ for some polynomial $t$ in the input length. Without loss of generality, we assume that all configurations of $M$ are either existential or universal, the initial and halting configurations are existential, and any computation of $M$ always alternates between existential and universal configurations. We build a polynomial-time logspace verifier $V$ that checks complete-information debates about whether  $M$ accepts the input string. Let $\varepsilon$ be the desired error bound of $V$.

As we had in the protocol described in the proof of Lemma \ref{PSPACEinZDEB}, P1 and P0 are supposed to provide the existential and universal choices of the simulated machine $M$ to the verifier. Since $V$ does not have the resources to store a configuration of $M$, P0 is also expected to be giving a description of the configuration of $M$ after each simulated step. $V$ requires each such configuration description to be exactly $t$ symbols long, so P0 pads these messages with blanks when necessary. The players are supposed to restart the procedure after precisely $t$ configuration descriptions have been transmitted to $V$.


Using logarithmic space, $V$ can easily check that the number and the length of the configurations presented by P0 are legal. It is also easy for $V$ to check if the first configuration message matches the initial configuration of $M$. However, checking whether the present configuration sent by P0 follows from the previous one is not something that $V$ can do deterministically with this little memory. Instead, $V$ randomly picks an integer $k$ ($1 \leq k \leq t-2)$ at the beginning of each simulation of $M$, and compares only the $k$th, $(k+1)$th and $(k+2)$th symbols of the $i$th configuration with the corresponding symbols of the $(i+1)$th configuration,  for $1 \leq i < t-1$. If it sees a violation of the transition rules of $M$ within this window, $V$ detects P0's lie, and accepts the input. If $V$ fails to find any such error by P0 in $d=\lceil \ln \frac{1}{\varepsilon} \rceil t$ successive simulations of $M$, all of which are accompanied with computation paths ending with rejecting configurations, it rejects.

If the input string is not in $L$, there is a computational path of $M$ which ends with a rejecting configuration no matter which existential choices are made by P1, and $V$ rejects with probability 1 when this path is presented. Otherwise, P0 must sneak a transition error somewhere so that it can end up with a rejecting configuration. In any single simulation of $M$, $V$ will fail to catch such a transition error with probability at most $\frac{t-1}{t}$. The probability that a member of $L$ will be rejected by $V$ is thus ${ \left( \frac{t-1}{t} \right) }^d$, which can be shown to be not more than $\varepsilon$. It is clear that $V$ runs in polynomial time.

To prove $\mathsf{PSPACE} \subseteq \mathsf{ZDEB^*}(log, poly,poly)$, we use the same setup to simulate a polynomial-time blind alternating Turing machine in a zero-information debate checkable by a polynomial-time logspace verifier. In this case, P1 should not know about the universal choices of the simulated blind alternating machine. P0 guarantees this condition by using only the private alphabet $\Delta$ to talk to $V$.
\end{proof}

Thus, we have shown that allowing polynomial-time verifiers of complete-information debates to use polynomial amounts of randomness lets us constrain their space bounds logarithmically, without decreasing their power. Giving P0 further privacy does not add to the power under these bounds.
\begin{theorem}
\label{PSPACEeqDEB(log,poly)}
$\mathsf{CDEB}(log, poly,\infty) = \mathsf{ZDEB}(log, poly,\infty) = \mathsf{PDEB}(log, poly,\infty) = \mathsf{PSPACE}$.
\end{theorem}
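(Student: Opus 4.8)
The plan is to assemble the statement from the lemmas and facts already proved, together with one elementary remark about how the three debate formats sit inside one another. First I would record the trivial containments $\mathsf{CDEB}(log,poly,\infty)\subseteq\mathsf{PDEB}(log,poly,\infty)$ and $\mathsf{ZDEB}(log,poly,\infty)\subseteq\mathsf{PDEB}(log,poly,\infty)$. By the definitions of Section \ref{subsec:model}, a complete-information debate checker is precisely a verifier with $\Delta=\emptyset$, and a zero-information debate checker is one with $\Gamma_0=\emptyset$; in either case every clause of the general (partial-information) definition --- the notions of debate, of well-formed debate subtree, and the acceptance/rejection conditions --- specializes verbatim to the corresponding restricted case, so such a verifier is already a legitimate partial-information debate checker (with a degenerate alphabet) recognizing the same language. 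Hence both $\mathsf{CDEB}$ and $\mathsf{ZDEB}$ classes embed in the matching $\mathsf{PDEB}$ class at the same resource bounds.

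For the upper bound, Lemma \ref{PDEB(log,poly)inPSPACE} gives $\mathsf{PDEB}(log,poly,\infty)\subseteq\mathsf{PSPACE}$; combined with the two embeddings above, all three of $\mathsf{CDEB}(log,poly,\infty)$, $\mathsf{ZDEB}(log,poly,\infty)$, $\mathsf{PDEB}(log,poly,\infty)$ are contained in $\mathsf{PSPACE}$. For the lower bound, Lemma \ref{PSPACEinCDEB(log,poly)} gives $\mathsf{PSPACE}\subseteq\mathsf{CDEB^*}(log,poly,poly)$ and $\mathsf{PSPACE}\subseteq\mathsf{ZDEB^*}(log,poly,poly)$; since every starred class is contained in the corresponding unstarred one, and since enlarging the randomness budget from $poly$ to $\infty$ only enlarges the class, we get $\mathsf{PSPACE}\subseteq\mathsf{CDEB}(log,poly,\infty)$ and $\mathsf{PSPACE}\subseteq\mathsf{ZDEB}(log,poly,\infty)$. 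For the partial-information class one may invoke the first-paragraph embedding $\mathsf{CDEB}(log,poly,\infty)\subseteq\mathsf{PDEB}(log,poly,\infty)$, or simply note that Fact \ref{fact:pdeblogpoly0} already yields $\mathsf{PSPACE}=\mathsf{PDEB}(log,poly,0)\subseteq\mathsf{PDEB}(log,poly,\infty)$. Putting the two directions together, each of the three classes is squeezed between $\mathsf{PSPACE}$ and $\mathsf{PSPACE}$, hence equals it.

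I do not expect a genuine obstacle here: the mathematical weight of the theorem is entirely carried by Lemma \ref{PDEB(log,poly)inPSPACE} (the adaptation of Condon's max-word-problem reduction to partial-information debates) and by Lemma \ref{PSPACEinCDEB(log,poly)} (the randomized transition-window checking construction). The one place that deserves a sentence of justification rather than a wave of the hand is the claim that a complete- or zero-information debate checker can be read as a partial-information one; but, as noted, this is immediate from the fact that $\Delta=\emptyset$ (resp.\ $\Gamma_0=\emptyset$) turns the general definition into exactly the complete- (resp.\ zero-) information definition, with no change to the recognized language or to the space, time, and randomness usage.
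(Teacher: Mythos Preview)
Your proposal is correct and matches the paper's intended argument: the theorem is stated immediately after Lemmas~\ref{PDEB(log,poly)inPSPACE} and~\ref{PSPACEinCDEB(log,poly)} with no separate proof, precisely because it is meant to be assembled from those two lemmas (plus Fact~\ref{fact:pdeblogpoly0} and the trivial containments you spell out). Your explicit remark that complete- and zero-information debates are the $\Delta=\emptyset$ and $\Gamma_0=\emptyset$ specializations of the partial-information definition is exactly the justification the paper leaves implicit.
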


\section{Nonapproximability of the quantified max word problem}
\label{sec:nonapp}

The \textit{quantified max word problem for matrices} (QMW problem) is defined as follows. Given a finite set $M$ of $m \times m$ matrices,  two $m$-length vectors $v$ and $w$, a bound $c$, and  quantifiers $Q_i \in \{ \exists, \forall\}$, is it possible to satisfy the inequality $Q_1M_1Q_2M_2...Q_kM_k [ vM_1M_2...M_kw^T>c ]$, where the $M_i$ variables will be selected from the members of $M$, for $1 \leq i \leq k$? 

The proof of Lemma \ref{PDEB(log,poly)inPSPACE} reveals that every language in $\mathsf{CDEB}(log, poly,\infty)$ is polynomial-time reducible to the QMW problem. Therefore, the QMW problem is $\mathsf{PSPACE}$-hard. The fact that any instance of the QMW problem can be solved using polynomial space by a simple exhaustive depth-first search on the finite game tree of the instance implies the following.

\begin{corollary}
The quantified max-word problem for matrices is $\mathsf{PSPACE}$-complete.
\end{corollary}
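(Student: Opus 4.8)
The plan is to handle the two directions separately. The $\mathsf{PSPACE}$-hardness is already in hand: as noted in the paragraph preceding the corollary, the construction in the proof of Lemma~\ref{PDEB(log,poly)inPSPACE}, read with $\Delta=\emptyset$, is a polynomial-time reduction from an arbitrary $L \in \mathsf{CDEB}(log, poly, \infty)$ to the QMW problem — the private alternating machine there alternately picks matrices $W_{1,\sigma}$ ($\sigma\in\Gamma_1$) existentially and $W_{0,\sigma}$ ($\sigma\in\Gamma_0$) universally for $2t$ rounds ($t$ polynomial in the input length) and accepts iff $vWw^T > 1-\epsilon$ for the product $W$ of the chosen matrices, which is literally a QMW instance (with $v$ the first-coordinate indicator vector, $w$ the indicator of accepting reading configurations, $c=1-\epsilon$, and quantifier prefix $\exists\forall\exists\forall\cdots$) — and $\mathsf{CDEB}(log, poly, \infty) = \mathsf{PSPACE}$ by Theorem~\ref{PSPACEeqDEB(log,poly)}. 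So only the upper bound, that the QMW problem lies in $\mathsf{PSPACE}$, remains, and it is routine.

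For that, I would evaluate the quantifier prefix by a depth-first search of its finite game tree — equivalently, give an alternating polynomial-time algorithm and invoke $\mathsf{AP}=\mathsf{PSPACE}$ \cite{CKS81}. The search carries a single auxiliary $m$-vector $u$ along the current root-to-node path, starting from $u=v$; at the node of depth $i-1$ it branches over the $|M|$ matrices $A\in M$ — existentially if $Q_i=\exists$ and universally if $Q_i=\forall$, an $O(\log|M|)$-bit step — and recurses with $u$ replaced by $uA$; at depth $k$ it accepts iff $uw^T > c$. The recursion stack has depth $k$, each frame storing $u$ and a loop index over $M$, so polynomial space suffices provided these quantities stay polynomially bounded.

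The one point that needs care — and the only place a sloppy argument could break — is the bit-length of the running vector $u$. Since $u$ is updated one matrix at a time by right-multiplication, rather than by first forming the full matrix product, after $i$ updates each entry of $u$ is a sum of $m$ products of $i+1$ input numbers, so its bit-length is at most the total bit-length of the input matrix and vector entries plus $O(k\log m)$, which is polynomial in the instance size; the terminal comparison $uw^T > c$ is then between polynomially long numbers (after clearing denominators, integers). Hence the search runs in polynomial space, giving membership of the QMW problem in $\mathsf{PSPACE}$, which together with the hardness direction yields $\mathsf{PSPACE}$-completeness.
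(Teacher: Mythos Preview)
Your proposal is correct and follows essentially the same approach as the paper: hardness via the reduction implicit in Lemma~\ref{PDEB(log,poly)inPSPACE} combined with Theorem~\ref{PSPACEeqDEB(log,poly)}, and membership via a depth-first traversal of the game tree. The paper in fact dispenses with the upper bound in a single sentence (``a simple exhaustive depth-first search on the finite game tree''), so your explicit bit-length accounting for the running vector $u$ is a welcome addition rather than a deviation.
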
 

We now consider the maximization version of the QMW problem, MAX-QMW. Suppose that the matrices in the inequality of an instance $\Pi$ of the QMW problem are chosen by two players named P1 and  P0. In particular, P0 and P1 choose matrices quantified by $\forall$ and $\exists$, respectively, in the order of quantification. Their game returns the result of the matrix multiplication $ vM_1M_2...M_kw^T$. Let $\Omega_\Pi$ be the maximum number that player P1 can guarantee to get as the product at the end, no matter what player P0 does. MAX-QMW is the function from the domain consisting of instances of the quantified max-word problem to their $\Omega$ values. 

We say that a function $g(x)$ can be approximated within factor $f(n) > 1$ if there is a polynomial-time algorithm which outputs a value in the interval $\left[ \frac{g(x)}{f(|x|)}, g(x)f(|x|) \right] $ for any $x$ in the domain of $g(x)$. 

We now state a corollary of Lemma \ref{PDEB(log,poly)inPSPACE}. It is shown in essentially the same way with Theorem 3.1 of \cite{Co93B}, which uses a result on one-way interactive proof systems that is similar to our Lemma \ref{PDEB(log,poly)inPSPACE}.   

\begin{corollary}
The maximization version of the QMW problem cannot be approximated within factor ${n^c}$ in polynomial time for any constant $c >0$, unless $\mathsf{P}=\mathsf{PSPACE}$.
\end{corollary}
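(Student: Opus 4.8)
The plan is to turn the $\mathsf{PSPACE}$-hardness reduction underlying Lemma \ref{PDEB(log,poly)inPSPACE} into a gap-producing reduction, following the strategy of Theorem 3.1 of \cite{Co93B}. Fix the constant $c>0$, and suppose toward a contradiction that some polynomial-time algorithm $A$ approximates MAX-QMW within factor $n^c$. Fix a $\mathsf{PSPACE}$-complete language $L$; the goal is to decide $L$ in polynomial time, which forces $\mathsf{P}=\mathsf{PSPACE}$.

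The first step is to obtain a debate checker for $L$ whose error shrinks polynomially in the input length. Inspecting the proof of Lemma \ref{PSPACEinCDEB(log,poly)}, the verifier constructed there checks complete-information debates for $L$ with error at most $\varepsilon$ by running $d=\lceil\ln(1/\varepsilon)\rceil\,t$ independent simulations of a polynomial-time alternating machine for $L$, picking one integer in $[1,t-2]$ at each simulation. Setting $\varepsilon=\varepsilon(n)=n^{-k}$ for a constant $k$ to be fixed below leaves $d$ polynomial and the total number of coin tosses polynomial, so this is still a logspace, polynomial-time, polynomial-randomness verifier $V$, now with the property that $P(a)^w_{(V,\pi)}\ge 1-n^{-k}$ along P1's winning subtree when $w\in L$, while for $w\notin L$ some P0 path $\pi$ has $P(a)^w_{(V,\pi)}\le n^{-k}$.

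Applying the construction in the proof of Lemma \ref{PDEB(log,poly)inPSPACE} to this $V$ yields, in polynomial time, a QMW instance $\Pi_w$: the matrices are the transition-probability matrices $W_{1,\sigma}$ and $W_{0,\sigma}$ of $V$, of dimension $m\times m$ with $m$ polynomial in $|w|$, there are $2t$ alternating quantifiers, and the vectors are the $v$ and $f$ of that proof. Because the debate here is of complete information, P1's well-formed debate subtrees are exactly the pure $\exists$-strategies of the matrix game, and the correspondence established in Lemma \ref{PDEB(log,poly)inPSPACE} between the matrix product $vM_1\cdots M_k f^T$ and the acceptance probability of $V$ on the corresponding debate path shows that $\Omega_{\Pi_w}$, the min-max value of the game, satisfies $\Omega_{\Pi_w}\ge 1-\varepsilon(|w|)$ when $w\in L$ and $\Omega_{\Pi_w}\le\varepsilon(|w|)$ when $w\notin L$. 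Writing $N=|\Pi_w|$, which is bounded by a fixed polynomial in $|w|$, we choose $k$ large enough that $(1-\varepsilon(|w|))/\varepsilon(|w|)>N^{2c}$ for all but finitely many $w$; the remaining finitely many short inputs are handled by a lookup table. Running $A$ on $\Pi_w$ returns a value in $[\Omega_{\Pi_w}/N^c,\,\Omega_{\Pi_w}N^c]$, which is at least $(1-\varepsilon(|w|))/N^c$ if $w\in L$ and at most $\varepsilon(|w|)N^c$ if $w\notin L$; by the choice of $k$ these two intervals are disjoint, so we decide membership in $L$ in polynomial time, the desired contradiction.

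The two points I would verify with care are that reducing $\varepsilon$ to $n^{-k}$ preserves the polynomial time and polynomial randomness of the verifier of Lemma \ref{PSPACEinCDEB(log,poly)} — it does, since $d$ grows only like $\log(1/\varepsilon)$ — and that the matrix-game value $\Omega_{\Pi_w}$ genuinely equals the value of P1's best debating strategy against P0's best response, which is immediate for complete-information debates where the game tree has perfect information and pure strategies attain the min-max value. Everything else is the same polynomial-time bookkeeping already used above to derive the $\mathsf{PSPACE}$-completeness of the decision version of the QMW problem.
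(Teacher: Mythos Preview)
Your proposal is correct and follows exactly the approach the paper points to: it instantiates Condon's Theorem~3.1 argument by amplifying the error of the Lemma~\ref{PSPACEinCDEB(log,poly)} verifier to $n^{-k}$, applying the matrix reduction of Lemma~\ref{PDEB(log,poly)inPSPACE} to produce a QMW instance with a polynomial gap between the yes/no values, and then using the hypothetical $n^c$-approximator to separate the two cases. One incidental remark: for $w\notin L$ the verifier of Lemma~\ref{PSPACEinCDEB(log,poly)} actually has acceptance probability~$0$ along P0's optimal path, so the gap is even wider than you state, but your weaker bound already suffices.
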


\section{Concluding remarks}
\label{sec:conc}

One of the motivations for our model was to distinguish the three agents ($V$, P1, and P0) in the debate checking scenario clearly from each other. Interpretations of the alternating TM variants, as well as
some other models such as Condon's probabilistic game automata \cite{Co89}, sometimes merge the universal player and the verifier, and make it difficult to ask certain questions that are quite natural
in the three-person model. One such question is whether anything changes if we make the coins of the verifier public to the provers. In the model of \cite{Co89}, $V$'s coins are always visible to P0, and Condon shows that the class of languages with what we would call complete-information debates checkable with logspace verifiers whose coins are public to both P1 and P0 is contained in $\mathsf{NP}$. Our demonstration that $\mathsf{CDEB}(log, poly,poly) =  \mathsf{PSPACE}$ (Theorem \ref{PSPACEeqDEB(log,poly)}) therefore constitutes strong evidence that keeping the coins private increases the power of debate checkers.

We have seen that increasing the amount randomness available to the verifier enlarges the class of languages with complete-information debates, as demonstrated by the relations 
\[\mathsf{CDEB}(cons, \infty ,0)= \mathsf{REG}\subsetneq \mathsf{CDEB}(cons, \infty , cons) =  \mathsf{P},\]
and possibly by
\[\mathsf{CDEB}(log, poly , cons) = \mathsf{P}\subseteq \mathsf{CDEB}(log, poly,poly)= \mathsf{PSPACE}.\]
What is the effect of intermediate amounts of randomness, for example, can we characterize $\mathsf{CDEB}(log, poly , log)$?

When logspace polynomial-time verifiers on complete-information debates are allowed to use polynomially many coins, their language recognition power increases (Fact \ref{fact:ASPACE} and Theorem \ref{PSPACEeqDEB(log,poly)}), whereas similarly bounded verifiers on partial-information debates do not gain any additional power by randomness (Fact \ref{fact:pdeblogpoly0} and Theorem \ref{PSPACEeqDEB(log,poly)}). Since we do not know any characterization of $\mathsf{ZDEB}(log, poly, 0)$, whether randomness confers any benefit to logspace polynomial-time verifiers for zero-information debates is an open question.

\section*{Acknowledgement} We thank Abuzer Yakary{\i}lmaz, who introduced us to private alternation.

\appendix
\section{Equivalence of our model and private alternation}

\begin{lemma}
\label{ATMinDEB}
Given a PATM (resp. BATM) recognizing a language $L$, one can construct a deterministic verifier which checks partial-information (resp. zero-information) debates about membership in $L$, and has the same time and space bounds with the given PATM (resp. BATM).
\end{lemma}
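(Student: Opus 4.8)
The plan is to have the deterministic verifier $V$ simulate the given PATM $M$ step by step, letting P1 supply $M$'s existential choices through C1 and P0 supply $M$'s universal choices through C0, and to set up the alphabets $\Gamma_1,\Gamma_0,\Delta$ so that the information P1 is permitted to see in the debate is \emph{exactly} the information the existential player of $M$ is permitted to see. As in the proof of Lemma \ref{PinCDEBw}, I would first normalize $M$: it strictly alternates existential and universal configurations, every non-halting configuration has exactly two successor transitions of the appropriate type, and the start and halting configurations are existential. $V$ holds on its work tape a full description of the current configuration of $M$ --- \emph{both} its common and its private parts --- so $V$ uses the same space as $M$ up to a constant factor, and its running time is that of $M$ up to a constant factor plus the cost of reading the debate. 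Every universal transition $\tau$ of $M$ has a \emph{public effect} (the induced change to the common state, the scanned common-tape symbols, and the common head positions --- $O(1)$ bits) and a \emph{private effect} (the change to the private state and private tape --- also $O(1)$ bits). Take $\Gamma_0$ to be a finite alphabet encoding the possible public effects and $\Delta$ a disjoint finite alphabet encoding the private effects; $\Gamma_1$ encodes the two existential choices. At an existential step $V$ reads one symbol from C1; at a universal step $V$ reads a symbol $g\in\Gamma_0$ and a symbol $d\in\Delta$ from C0, reconstructs $\tau$ from the pair $(g,d)$, and applies it. The clash between $M$'s round structure (one existential step plus one universal step) and the strict symbol-by-symbol alternation required by our model is handled with fixed-length padding by dummy symbols and a constant-size counter in $V$, exactly as in the parenthetical remark of the proof of Lemma \ref{PinCDEBw}; $V$ rejects if it ever sees a package of the wrong length, and it interprets any garbled P0 package as a fixed valid universal transition.

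For completeness, suppose $w\in L$. Then $M$'s existential player has a winning strategy $\xi$, which by definition of a PATM is a function of the common computation history only. Since the common memory changes deterministically except at choice points, this common history is determined by the sequence of public effects seen so far together with $\xi$'s own previous choices; hence $\xi$ induces a well-defined rule mapping each sequence of P0 public symbols (which, by our encoding, carries exactly the public effects) to the next P1 symbol. This rule defines a debate subtree that is well-formed: two P1 nodes at the same level emit different symbols only when the P0 public sequences they have seen differ. Along every path of this subtree, $V$'s simulation follows a computation of $M$ that uses $\xi$ against some play of the universal player, hence an accepting computation, and $V$ never loops because $M$ is bounded; so $P(a)^w_{(V,\pi)}=1$ for every debate $\pi$ of the subtree.

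For soundness, suppose $w\notin L$ and fix any well-formed debate subtree. By well-formedness, the symbol P1 emits at each of its nodes depends only on the P0 public sequence seen there, so reading this off as a function of the common history of $M$ yields a legal existential strategy $\xi$ for $M$ (one that indeed ignores private information). Since $M$ rejects $w$, there is a universal counter-strategy $\upsilon$ against $\xi$ that drives $M$ to a rejecting configuration; $\upsilon$ may depend on the full common-and-private history. P0 can compute $\upsilon$'s moves, because from P1's public symbols (which reveal $\xi$'s choices), its own public symbols, and its own private symbols it can reconstruct the entire computation of $M$, private memory included. Having P0 send the pair $(g,d)$ encoding $\upsilon$'s transition at each universal step produces a debate $\pi$ in the subtree along which $V$'s simulation reaches the rejecting configuration, so $P(r)^w_{(V,\pi)}=1$. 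For the blind case, a BATM is precisely the special case in which universal transitions leave the common memory unchanged, so every universal transition has the trivial public effect; then $\Gamma_0$ can be taken empty, matching the definition of a zero-information debate, and P0 communicates each universal choice by a single symbol of $\Delta$ (padding with dummy private symbols). Correspondingly, in a zero-information debate the P0 sequence seen by a P1 node is a string of $\flat$'s determined by its level alone, so well-formedness forces P1 into a non-adaptive sequence of choices --- exactly the form an existential strategy of a BATM takes, since the common history is then independent of the universal choices --- and the two correctness arguments go through verbatim with $\Gamma_0=\emptyset$.

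I expect the main obstacle to be pinning down this information-hiding correspondence exactly: one must choose the split of each universal transition into a $\Gamma_0$-part and a $\Delta$-part so that the debate reveals to P1 \emph{neither less nor more} than $M$ reveals to its existential player --- enough that a winning existential strategy can always be reproduced by some well-formed debate subtree (needed for completeness), yet little enough that every well-formed P1 strategy corresponds to a legal existential strategy of $M$ (needed for soundness). The remaining ingredients --- the padding bookkeeping that reconciles the round structure with symbol-by-symbol alternation, and the verbatim preservation of the time and space bounds --- are routine.
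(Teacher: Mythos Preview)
Your approach differs from the paper's in how the public/private split of universal moves is realized. The paper first normalizes $M$ so that every universal state is either \emph{fully public} (may touch only common memory) or \emph{fully private} (may touch only private memory); then $V$ literally runs $M$'s program with the same state set, reading C0 at universal states and expecting a symbol from $\Gamma_0$ at public universal states and from $\Delta$ at private ones, \emph{accepting outright} whenever P0 uses the wrong alphabet. With this normalization each P0 symbol names a single well-defined branch, so no ``invalid package'' issue ever arises. You instead keep mixed universal transitions and have P0 transmit the public effect and the private effect as two separate symbols per step.

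That idea is sound, but your handling of bad P0 packages has a gap that breaks completeness. You have $V$ interpret any garbled P0 package as a fixed transition $\tau_0$. Consider, however, a package $(g,d)$ that is \emph{format-correct} ($g\in\Gamma_0$, $d\in\Delta$) yet does not match either transition available from the current configuration --- this can happen because the two valid transitions, and hence their public effects, may depend on the private state, which P1 cannot see. Here $V$ applies $\tau_0$, whose public effect need not be $g$, while P1, seeing only $(g,\flat)$, believes the public effect was $g$. P1's reconstruction of the common memory is now desynchronized from $V$'s simulation, so the debate subtree you build from $\xi$ no longer produces $\xi$'s responses to the configurations $V$ actually reaches, and the step ``hence an accepting computation'' fails. (A related hazard: if your rule ``$V$ rejects on a package of the wrong length'' is meant to cover P0's packages as well, P0 can simply force rejection on members of $L$.) The fix is exactly the paper's device: have $V$ \emph{accept} whenever P0's package is ill-formatted or fails to name a valid transition from the current configuration. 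Then P0 cannot desynchronize P1 without losing outright, your completeness argument goes through verbatim, and soundness is unaffected since an honest P0 implementing the universal counter-strategy never needs to send an invalid package.
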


\begin{proof}
Let $M$ be the given private alternating Turing machine. (The construction for blind alternating Turing machines is identical.) We assume without loss of generality that the set of universal states of $M$ is partitioned to two subsets $S_{pri}$ and $S_{pub}$, so that the machine can make moves that can change only the private (resp. public) portions of memory from states in $S_{pri}$ (resp. $S_{pub}$). We show how to construct a deterministic verifier $V$ that checks partial-information debates about $L$.

$V$ has the same state set as $M$. The universal and existential state sets of $M$ correspond to $R_0$ and $R_1$ (sets of states that read C0 and C1 to see which way to branch), respectively, in the verifier. $R_0$ is further partitioned to $R_{0,pri}$ and $R_{0,pub}$, corresponding to $S_{pri}$ and $S_{pub}$, respectively. The program of $V$ mimics that of $M$, but takes care of the following issue: States in $R_{0,pri}$ expect P0 to write a symbol from the private alphabet $\Delta$ in the reading cell, and jump to the accept state otherwise. Similarly, P0 is supposed to emit only members of the public alphabet  $\Gamma_0$ when $V$ is in a state in $R_{0,pub}$. This ensures that P1 knows only as much as it should about the configuration of $V$ at any point.


The existential player of $M$ has a winning strategy for this computation game \cite{Re79} if and only if there exists a well-formed debate subtree where all paths lead to an acceptance by $V$. Clearly, $V$ uses the same amount of memory and halts within the same number of steps with $M$. 
\end{proof}

\begin{lemma}
\label{DEBinATM}
Given a deterministic verifier that checks a partial-information (resp. zero-information) debate about membership in language $L$, one can construct a PATM (resp. BATM) which recognizes the same language within the  time and space bounds of the given verifier.
\end{lemma}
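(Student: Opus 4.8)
The plan is to reverse the construction of Lemma \ref{ATMinDEB}. Let $V$ be the given deterministic verifier for partial-information debates about $L$. I would build a PATM $M$ whose state set is that of $V$, with the reading states in $R_1$ becoming existential states and those in $R_0$ becoming universal states; the latter are further split according to the partition of $R_0$ into $R_{0,pri}$ and $R_{0,pub}$, which tells $M$ whether the corresponding universal branch should be made on the private tape (for a symbol that would have come from $\Delta$) or in a way visible to the existential player (for a symbol from $\Gamma_0$). Non-reading states of $V$ simply become deterministic moves of $M$. The input head, work tape, and head movements are copied directly, so the space and time bounds are preserved step for step.

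The key steps, in order, are: (1) describe the state partition and transition translation just sketched, taking care that a universal move from a state in $R_{0,pri}$ writes its choice only to the private portion of memory, so that an existential strategy of $M$ cannot depend on it — this is exactly what makes the resulting machine a legal PATM (and, when $\Gamma_0=\emptyset$, a legal BATM for zero-information debates); (2) observe that a run of $M$ on $w$ under a fixed pair of strategies corresponds precisely to $V$ reading one debate $\pi$ along one path of a well-formed debate subtree, where the existential strategy of $M$ determines the single child chosen at each P1 node and the universal branching enumerates all children at each P0 node; (3) invoke the characterization of incomplete-information games from \cite{Re79}: the existential player of $M$ has a winning strategy on input $w$ iff there is a well-formed debate subtree all of whose paths are accepted by $V$, which by the (deterministic, $\varepsilon=0$) definition of debate checking is iff $w\in L$; (4) note that the correspondence between P1's ignorance of P0's private messages and the PATM's prohibition on existential moves seeing private memory makes the well-formedness condition on debate subtrees and the legality condition on existential strategies literally the same condition, so no strategies are lost or gained in either direction.

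The main obstacle I expect is step (4): making rigorous the claim that ``well-formed debate subtree'' and ``existential strategy of a PATM'' are the same object. One must check that the function $h$ collapsing private symbols to $\flat$ in the definition of the P0-sequence-seen-by-$N$ matches exactly the information an existential state of $M$ is allowed to read, and that the requirement that two P1 nodes at the same level emit different symbols only when they have seen different P0-sequences is equivalent to the existential strategy being a well-defined function of the public history. This is a bookkeeping argument rather than a deep one, but it is where the proof has to be careful; everything else (head counts, timing, space) is a direct transcription, and the appeal to \cite{Re79} handles the game-theoretic core.
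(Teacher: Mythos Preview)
Your overall plan---turn P1-reading states into existential states, P0-reading states into universal states, and argue that well-formed debate subtrees correspond to legal existential strategies---is the right shape, and steps (2)--(4) of your outline match the paper's reasoning. But step (1) contains a genuine gap, and it is not just bookkeeping.

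You assume that the verifier's set $R_0$ comes with a partition into $R_{0,pri}$ and $R_{0,pub}$, telling $M$ in advance whether the next P0 symbol will be private or public. That partition existed in Lemma~\ref{ATMinDEB} only because it was inherited from the PATM's $S_{pri}/S_{pub}$ split; a \emph{general} verifier has no such partition. In the model of Section~\ref{subsec:model}, when $V$ is in any state of $R_0$, P0 is free to send a symbol from $\Gamma_0 \cup \Delta$, and the verifier must handle both possibilities from the same state. So you cannot decide, at the level of the state, whether the corresponding universal branch is public or private. Relatedly, you say the work tape is ``copied directly,'' but you do not say onto which tape of the PATM; since $V$'s work-tape contents depend on the private symbols P0 has sent, putting them on the common tape would leak information to the existential player and break the correspondence you need in step~(4).

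The paper fixes both points with one idea: $M$ performs \emph{all} universal branching privately and runs the entire simulation of $V$ on the private work tape. Only after a universal branch has produced a symbol does $M$ check whether it lies in $\Gamma_0$; if so, $M$ copies it to the common memory so the existential player can see it. This way there is no need for a pre-existing $R_{0,pri}/R_{0,pub}$ split, and the existential player sees exactly the $h$-image of the P0 sequence, making your step~(4) go through cleanly.
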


\begin{proof}
We build a private alternating Turing machine $M$ that simulates the given verifier $V$ by using universal and existential moves to select refuter and prover symbols, respectively, to feed to $V$. The important point is that $M$ takes all of its universal steps privately, and the simulation of $V$ is also performed on the private work tape. If a universal branching produces a symbol from the public alphabet $\Gamma_0$ between $V$ and P0, $M$ lets the existential player know about this by writing a copy of that symbol in the common memory area. This concludes the construction.
\end{proof}

\bibliographystyle{plain}
\bibliography{YakaryilmazSay}

\end{document}